\newtheorem{Theorem}{Theorem}
\newtheorem{Definition}{Definition}
\newtheorem{Proposition}{Proposition}
\newtheorem{Lemma}{Lemma}
\newtheorem{Remark}{Remark}
\newtheorem*{claim*}{Claim}
\newcommand*{\defeq}{\mathrel{\rlap{%
                     \raisebox{0.3ex}{$\m@th\cdot$}}%
                     \raisebox{-0.3ex}{$\m@th\cdot$}}%
                     =}
\title{On the Mathematics of Coframe Formalism and Einstein--Cartan Theory - A Brief Review}
\author{
 Manuel Tecchiolli\\
  Institute for Theoretical Physics, ETH Z\"{u}rich,\\
  Wolfgang-Pauli-Str. 27, 8093, Zürich, Switzerland
}
\date{24 September 2019}
\begin{document}
\maketitle

\abstract{This article is a~review of what could be considered the basic mathematics of Einstein--Cartan theory. We discuss the formalism of principal bundles, principal connections, curvature forms, gauge fields, torsion form, and Bianchi identities, and eventually, we will end up with Einstein--Cartan--Sciama--Kibble field equations and conservation laws in their implicit formulation.}

General Relativity, \and Torsion-Gravity, \and Mathematical Physics.

\tableofcontents

\section{Introduction}
The formulation of torsion gravity and the consequent coupling with spin rely on a~different formulation compared to the one of original works on General Relativity. This formulation regards geometrical objects called principal bundles. In this context, we can formulate General Relativity (or Einstein--Cartan--Sciama--Kibble (ECSK) theory in the presence of torsion) with a~principal connection, which can be pulled back to the base manifold in a~canonical way giving birth to a so called gauge field and consequently to the well-known spin connection. This process shows the possibility of formulating General Relativity as a~proper gauge theory rather than using the affine formulation and Christoffel symbols $\Gamma$. What permits the equivalence of the two formulations is a~{bundle isomorphism} called \textit{tetrads} or \textit{vierbein}, which is supposed to respect certain compatibility conditions. Then, we can define the associated torsion form and postulate the Palatini--Cartan action as a~functional of such tetrads and spin connection. This leads to ECSK field equations.

We will first set up all the abstract tools of principal bundles, tetrads, and principal connection; secondly, we will derive the Einstein--Cartan--Sciama--Kibble theory in its implicit version; and finally, we will discuss conservation laws coming from local SO$(3,1)$ and diffeomorphism invariance of ECSK theory.
\\

Throughout the article, we will give theorems and definitions. However, we would like to stress that hypotheses for such theorems will often be slightly redundant: we will take spaces and functions to be differentiable manifolds and smooth, even though weaker statements would suffice. This is because we prefer displaying the setup for formalizing the theory rather than presenting theorems and definitions with weaker hypotheses that we will never use for the theory. Nontheless, we will sometimes specify where such hypotheses are strengthened. In spite of this, the discussion will be rather general, probably more general than what is usually required in formulating ECSK (Einstein--Cartan--Sciama--Kibble) theory.

\section{Bundle Structure}
\label{sec:bundles}
The introduction of a~metric $g$ and an orthogonality relation via a~minkowskian metric $\eta$ are two fundamental ingredients for building up a~fiber bundle where we want the orthogonal group to act freely and transitively on the fibers. This will allow us to have a~principal connection and to see the perfect analogy with an ordinary gauge theory (\cite{BaezMuniain} chapter III).

Such a~construction underlies the concept of \textit{principal bundle}, and tetrads will be an isomorphism from the tangent bundle\footnote{Disjoint union of tangent spaces: $TM=\cup_{x\in M}\{x\}\times T_xM$} $TM$ to an associated bundle $\mathcal V$.

\subsection{$G$-Principal Bundle}
We give some definitions\footnote{References \cite{KN,WM,morita} are recommended for further details.}.

\begin{Definition}[G-principal bundle\footnote{We give the definition based on our purposes; in general, we can release some hypotheses. In particular, $G$ needs to be only a~locally compact topological group and $M$ needs to be a~topological Hausdorff space. This definition is a~version with a stronger hypothesis than the one contained in Reference \cite{OSI}.}]
Let $M$ be a~differentiable manifold and G be a~Lie group.

A $G$-principal bundle $P$ is a fiber bundle $\pi:P \to M$ together with a~smooth (at least continuous) right action $\mathfrak P: G\times P\to P$  such that $
\mathfrak P$ acts freely and transitively on the fibers\footnote{Fibers are $\pi^{-1}(x)$ $\forall x\in M$.} of $P$ and such that $\pi(\mathfrak P_g(p))=\pi(p)$ for all $g\in G$ and $p\in P$.
\end{Definition}
We need to introduce a~fundamental feature of fiber bundles.
\begin{Definition}[Local trivialization of a~fiber bundle]
Let $E$ be a~fiber bundle over M, a differentiable manifold, with fiber projection $\pi:E\to M$, and let $F$ be a~space\footnote{In the present case, $F$ will be a~differentiable manifold, a~vector space, a~topological space, or a~topological group. Furthermore, if we write ``space'', we mean one among these.}.

\textls[-20]{A local trivialization $(U,\varphi_U)$ of $E$, is a~neighborhood $U\subset M$ of $u\in M$ together with a~local diffeomorphism.}
\begin{equation}
\varphi_U: U\times F\to \pi^{-1}(U)
\end{equation}
such that $\pi(\varphi_U(u,f))=u\in U$ for all $u\in U$ and $f\in F$.
\end{Definition}
This definition implies $\pi^{-1}(u)\simeq F$  $\forall u\in U$.
\begin{Definition}[Local trivialization of a~$G$-principal bundle]
Let $P$ be a~$G$-principal bundle.

\textls[-20]{A local trivialization $(U,\varphi_U)$ of $P$ is a~neighborhood $U\subset M$ of $u\in M$ together with a~local diffeomorphism.}
\begin{equation}
\label{eq:loctriv}
\varphi_U: U\times G\to \pi^{-1}(U)
\end{equation}
such that $\pi(\varphi_U(u,g))=u\in U$ for all $u\in U$ and $g\in G$ and such that
\begin{equation}
\label{eq:commutationofactions}
\varphi_U^{-1}(\mathfrak P_g(p))=\varphi^{-1}_U(p)g=(u,g')g=(u,g'g).
\end{equation}
\end{Definition}
\noindent{\textbf{Observation 1:}} A fiber bundle is said to be \textit{locally trivial} in the sense that it admits a~local trivialization for all $x\in M$, namely there exists an open cover $\{U_i\}$ of $M$ and a~set of diffeomorphisms $\varphi_i$ such that every $\{(U_i,\varphi_i)\}$ is a~local trivialization\footnote{The bundle is said to be \textit{trivial} if there exists $(U,\varphi_U)$ with $U = M$.}.

Here, we recall the similarity with a~differentiable manifold. For a~manifold when we change charts, we have an induced diffeomorphism between the neighborhoods of the two charts, given by the composition of the two maps.

Thus, having two charts $(U_i, \phi_i)$ and $(U_j, \phi_j)$, we define the following:
\begin{equation}
\phi_j\circ\phi_i^{-1}:\phi_i(U_i\cap U_j)\to \phi_j(U_i\cap U_j).\end{equation}

At a~level up, we have an analogous thing when we change trivialization. Of course, here, we have one more element: the element of fiber.

Taking two local trivializations $(U_i, \varphi_i)$ and $(U_j, \varphi_j)$ and given a~smooth left action $\mathcal T:G\to \operatorname{Diffeo}(F)$ of $G$ on $F$, we then have
\begin{equation}
(\varphi_j^{-1}\circ\varphi_i)(x,f)=\big(x,\mathcal T(g_{ij}(x))(f)\big) \hspace{1cm} \forall x\in U_i\cap U_j, f\in F.
\end{equation}

\noindent where the maps $g_{ij}:U_i\cap U_j\to G$ are called the \textit{transition functions} for this change of trivialization and $G$ is called the structure group.

Such functions obey the following transition functions conditions for all $x\in U_i\cap U_j $:
\begin{itemize}
\item[--] $ g_{ii}(x)=id$
\item[--] $ g_{ij}(x)=(g_{ji}(x))^{-1}$
\item[--] $g_{ij}(x)=g_{ik}(x) g_{kj}(x)$ for all $x\in U_i\cap U_k\cap U_j$.
\end{itemize}
The last condition is called the \textit{cocycle condition}.
\begin{Theorem}[Fiber bundle construction theorem]
Let $M$ be a~differentiable manifold, $F$ be a~space, and $G$ be a~Lie group with faithful smooth left action $\mathcal T:G\to \operatorname{Diffeo}(F)$ of $G$ on $F$.

Given an open cover $\{U_i\}$ of $M$ and a~set of smooth maps,
\begin{equation}
t_{ij}:U_i\cap U_j \to G
\end{equation}
defined on each nonempty overlap, satisfying the transition function conditions.

Then, there exists a~fiber bundle $\pi:E\to M$ such that
\begin{itemize}
\item[--] $\pi^{-1}(x)\simeq F$ for all $x\in M$
\item[--] its structure group is $G$, and
\item[--] it is trivializable over $\{U_i\}$ with transition functions given by $t_{ij}$.
\end{itemize}
\end{Theorem}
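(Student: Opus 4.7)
The plan is to construct $E$ explicitly as a quotient of a disjoint union of local pieces, with the transition functions dictating the gluing. Concretely, I would start from $\tilde E \defeq \bigsqcup_i (U_i\times F)$, equipped with the disjoint-union topology (and smooth structure, since each $U_i\times F$ is a product manifold). The idea is that the total space $E$ should agree, over each $U_i$, with $U_i\times F$, and that on overlaps the two descriptions should differ by the action $\mathcal T(t_{ij})$. This is precisely what is encoded by the equivalence relation
\begin{equation*}
(x,f)\in U_i\times F \;\sim\; (y,f')\in U_j\times F \quad\Longleftrightarrow\quad x=y\in U_i\cap U_j \;\text{ and }\; f'=\mathcal T(t_{ji}(x))(f),
\end{equation*}
and I would set $E\defeq \tilde E/\!\sim$, with $\pi([x,f])\defeq x$, which is well-defined because all representatives share the same base point.

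The central verification is that $\sim$ is actually an equivalence relation, and this is exactly where the three cocycle conditions enter. Reflexivity follows from $t_{ii}(x)=\mathrm{id}$ together with $\mathcal T(\mathrm{id})=\mathrm{id}_F$; symmetry follows from $t_{ji}(x)=t_{ij}(x)^{-1}$ and the fact that $\mathcal T$ is a group homomorphism into $\operatorname{Diffeo}(F)$; transitivity follows from $t_{ki}(x)=t_{kj}(x)\,t_{ji}(x)$ applied on the triple overlap $U_i\cap U_j\cap U_k$. Once $\sim$ is an equivalence relation, I would define the candidate trivializations
\begin{equation*}
\varphi_i : U_i\times F \longrightarrow \pi^{-1}(U_i), \qquad \varphi_i(x,f)\defeq [x,f]_\sim,
\end{equation*}
and observe that each $\varphi_i$ is bijective (injectivity uses faithfulness of $\mathcal T$, together with the fact that within a single chart no nontrivial identifications occur), and that by construction $\varphi_j^{-1}\!\circ\varphi_i(x,f)=(x,\mathcal T(t_{ji}(x))f)$, which recovers the prescribed transition functions.

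The main obstacle, and the only step that is not entirely formal, is to promote $E$ from a set-theoretic quotient to an honest fiber bundle with the correct smooth structure. For this I would give $E$ the quotient topology, verify that each $\varphi_i$ is a homeomorphism onto its image (using that the equivalence classes intersect $U_i\times F$ in at most one point, so the restriction of the quotient map to $U_i\times F$ is injective and open), and then transport the product smooth structure of $U_i\times F$ to $\pi^{-1}(U_i)$ via $\varphi_i$. The compatibility of these local smooth structures on overlaps is precisely the statement that the overlap map $(x,f)\mapsto (x,\mathcal T(t_{ji}(x))f)$ is a diffeomorphism of $(U_i\cap U_j)\times F$, which follows from smoothness of the $t_{ij}$ and of the action $\mathcal T$. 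A final bookkeeping step is to confirm that $E$ is Hausdorff and second countable (inherited from $M$ and $F$ via the local trivializations), and that $\pi$ is smooth and surjective, at which point the three bulleted properties are immediate from the construction.
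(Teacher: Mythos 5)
The paper does not prove this theorem itself but defers to Sharpe (Chapter 1), whose argument is precisely the quotient-of-a-disjoint-union construction you describe, so your proposal is correct and follows essentially the same route: glue $\bigsqcup_i(U_i\times F)$ by the relation induced by $\mathcal T(t_{ji})$, use the three transition-function conditions for reflexivity, symmetry, and transitivity, and transport the product smooth structure through the tautological trivializations. One small correction: injectivity of $\varphi_i$ does not rely on faithfulness of $\mathcal T$ --- it already follows from $t_{ii}=\mathrm{id}$ forcing each equivalence class to meet a given chart in at most one point, which you also note; faithfulness is instead what makes the transition functions, and hence the assertion that the structure group is $G$, uniquely determined by the resulting bundle.
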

A proof of the theorem can be found in Reference \cite{Sharpe} (Chapter $1$).

\subsection{Coframe Bundle and Minkowski Bundle}

It is clear now that having $E$ as a~fiber bundle over $M$ with fibers isomorphic to $F$ and $F'$ as a~space equipped with the smooth action $\mathcal T'$ of $G$, implies the possibility of building a~bundle $E'$ \textit{associated} to $E$, which shares the same structure group and the same transition functions $g_{ij}$. By the fiber bundle construction theorem, we have a~new bundle $E'$ over $M$ with fibers isomorphic to $F'$.

This bundle is called the \textit{associated bundle} to $E$.

Depending on the nature of the associated bundle\footnote{We will be dealing with two particular types of associated bundles: a~principal bundle associated to a~vector bundle and a~vector bundle associated to a~principal bundle.}, we have the following two definitions:
\begin{Definition}[Associated $G$-principal bundle]
Let $\pi:E\to M$ be a~fiber bundle over a~differentiable manifold $M$, $G$ be a~Lie group, $F'$ be a~topological space, and $\mathfrak P$ be a smooth right action of $G$ on $F'$. Let also $E'$ be the associated bundle to $E$ with fibers isomorphic to $F'$.

If $F'$ is the principal homogeneous space\footnote{The space where the orbits of $G$ span all the space.} for $\mathfrak P$, namely $\mathfrak P$ acts freely and transitively on $F'$, then $E'$ is called the $G$-principal bundle associated to $E$.
\end{Definition}

\begin{Definition}[Associated bundle to a~$G$-principal bundle]
Let $P$ be a~$G$-principal bundle over M, $F'$ be a~space, and $\rho$: G$\to$ $\operatorname{Diffeo}(F')$ be a~smooth effective left action of the group $G$ on $F'$.

We then have an induced right action of the group $G$ over $P\times F'$ given by
\begin{equation}
(p,f')*g=(\mathfrak P_g(p),\rho(g^{-1})(f')).
\end{equation}

We define the associated bundle $E$ to the principal bundle $P$, as an equivalence relation:
\begin{equation}
E\coloneqq P\times_\rho F'=\frac{P\times F'}{\sim},
\end{equation}
where $(p,f')\sim(\mathfrak P_g(p),\rho(g^{-1})(f'))$, $p\in P$, and $ f'\in F'$ with projection $\pi_\rho:E\to M$ s.t. $\pi_\rho([p,f'])=\pi(p)=x\in M$. 
\end{Definition}

Therefore $\pi_\rho:E\to M$ is a~fiber bundle over $M$ with $\pi^{-1}_\rho(x)\simeq F'$ for all $x\in M$.

\vspace{12pt}
\noindent{\textbf{Observation 2:}} The new bundle, given by the latter definition, is what we expected from a~general associated bundle: a~bundle with the same base space, different fibers, and the same structure group.

\vspace{12pt}
\noindent{\textbf{Idea:}} We take a~$G$-principal bundle $P$ as an associated bundle to $TM$, and we build a~vector bundle associated to $P$ with a~fiber-wise metric $\eta$. We shall call this associated bundle $\mathcal V$.

\vspace{12pt}
First of all, we display the $G$-principal bundle as the $G$-principal bundle associated to $TM$.

\begin{Definition}[Orthonormal coframe]
Let $(M,g)$ be a~pseudo-riemannian $n$-dimensional differentiable manifold and $(V,\eta)$ be an $n$-dimensional vector space with minkowskian metric $\eta$. \\
A coframe at $x\in M$ is the linear isometry.
\begin{equation}
{}_xe\coloneqq\big\{{}_xe:T_xM\to V\big | \hspace{0.15cm} {}_xe^*\eta\coloneqq\eta_{ab}\,{}_x e^a{}_xe^b=g\big\},
\end{equation}
equivalently ${}_xe^a$ forms an ordered orthonormal basis in $T^*_xM$.

An orthonormal frame is defined as the dual of a~coframe.
\end{Definition}

\noindent{\textbf{Observation 3:}} Locally, coframes can be identified with local covector fields. A necessary and sufficient condition for identifying them with global covector fields (namely a~coframe for each point of the manifold) is to have a~\textit{parallelizable} manifold, namely a~trivial tangent bundle.

\begin{Definition}[Orthonormal coframe bundle]
Let $(M,g)$ be a~differentiable $n$-dimensional manifold with pseudo-riemannian metric $g$ and $T^*M$ be its cotangent bundle (real vector bundle of rank $n$).

We call the coframe bundle $F^*_O(M)$ the $G$-principal bundle where the fiber at $x\in M$ is the set of all orthonormal coframes at $x$ and where the group $G=\operatorname{O}(n-1,1)$ acts freely and transitively on them.
\end{Definition}

The dual bundle of this is the orthonormal frame bundle, and it is denoted by $F_O(M)$, made up of orthonormal frames (dual of orthonormal coframes).

\vspace{12pt}
\noindent{\textbf{Observations 4:}} \begin{enumerate}
\item The orthonormal frame bundle is an associated $G$-principal bundle to $TM$.
\item We can consider the \textit{Minkowski bundle} $\mathcal V$ the vector bundle over $M$ with fibers $V$. It is clear that such a bundle and $F_O(M)$ are one of the associated bundles of the other via action of the orthogonal group $\operatorname{O}(n-1,1)$. Therefore, $\mathcal V\coloneqq F_O(M) \times_\rho V$, where $\rho$ is taken to be the fundamental representation of $O(n-1,1)$.
\item We stress that this bundle $\mathcal V$ is not canonically isomorphic to $TM$; in general, there is no canonical choice of a~representative of ${}_xe$ of the equivalence class $[{}_xe,v]\in\mathcal V$, of which the inverse ${}_xe^{-1}(v)$ gives rise to a~canonical identification of a~vector in $T_xM$. Namely, fixed a~$v\in V$, not all choices of ${}_xe$ give rise to a~fixed vector $X\in T_xM$. As a~matter of fact, the reference metric fixed on $V$ does not allow in general the existence of a~canonical soldering (Section \ref{sec7}).  In Reference \cite{Gielen:2012fz}, it is shown how to define the Minkowski bundle without deriving it from $F_O(M)$; the authors refer to that as \textit{fake tangent~bundles}.
\item If~the manifold is parallelizable, we have the bundle isomorphism $e:TM\to \mathcal V$, which is given by the identity map over $M$ and ${}_xe:T_xM\to V$ $\forall x\in M$. It can be regarded as a~$\mathcal V$-valued $1$-form $e\in\Omega^1(M, \mathcal V)$. We can identify $e$ with an element of $\Omega^1(M,V)$, thus with global sections of the cotangent bundle such that, at each point in $M$, the corresponding covectors ${}_xe^a$ obey $\eta_{ab}\,{}_xe^a{}_xe^b=g$.
\end{enumerate}

We are now ready to define tetrads.

\begin{Definition}[Tetrads]
Let $\rho:O(3,1)\to\operatorname{Aut}(V)$ be the fundamental representation.

Tetrads are the bundle isomorphisms $e:TM\to\mathcal V$. They are identifiable with elements $e\in\Omega^1(M, \mathcal V)$, and if $M$ is parallelizable, tetrads can be identified with $\Omega^1(M, V)\ni e^av_a$ such that $\{v_a\}$ is an orthonormal basis of $V$, $e^a\in\Omega^1(M)$, and $\eta_{ab}e^ae^b=g$.
\end{Definition}

\section{Principal Connection}
\vspace{0.15cm}
\noindent\textit{Is there any difference?}
\vspace{0.3cm}\\
In the ordinary formulation of General Relativity (as in the original Einstein's work, for instance), we have objects called $\Gamma s $, which are coefficients of a~linear connection $\nabla$ and thus determined by a~parallel transport of tangent vectors.

The biggest advantage of treating $\operatorname{O}(3,1)$ as an ``explicit symmetry'' of the theory is that we have obtained the possibility of defining a~\textit{principal connection}, which is the same kind of entity we have in an ordinary gauge theory\footnote{Think of $(U(1),A_\mu)$ for electromagnetism.}.

\subsection{Ehresmann Connection}
If we consider a~smooth fiber bundle $\pi:E\to M$, where fibers are differentiable manifolds, we~can of course take tangent spaces at points $e\in E$. Having the tangent bundle $TE$, we may wonder if it is possible to separate the contributions coming from $M$ to the ones from the fibers.

This cannot be done just by stating $TE=TM\oplus TF$, unless $E=M\times F$ is the trivial bundle. Namely, we cannot split directly vector fields on $M$ from vector fields on the fibers $F$.

We can formalize this idea: use our projection $\pi$ for constructing a~tangent map $\pi_*=d\pi:TE\to TM$, and consider its kernel.

\begin{Definition}[Vertical bundle]
Let $M$ be a~differentiable manifold and $\pi:E\to M$ be a~smooth fiber bundle.

We call the sub-bundle $VE=\operatorname{Ker}(\pi_*:TE\to TM)$ the vertical bundle.
\end{Definition}

Following this definition, we have the natural extension to the complementary bundle of the vertical bundle, which is somehow the formalization of the idea we had of a~bundle that takes care of tangent vector fields on $M$.

\begin{Definition}[Ehresmann connection]
\textls[-20]{Let $M$ be a~differentiable manifold and $\pi:E\to M$ be a~smooth fiber~bundle.}

Consider a~complementary bundle $HE$ such that $TE=HE\oplus VE$. We call this smooth sub-bundle $HE$ the horizontal bundle or Ehresmann connection.
\end{Definition}

Thus, vector fields will be called \textit{vertical} or \textit{horizontal} depending on whether they belong to $\Gamma(VE)$ or $\Gamma(HE)$, respectively.

\subsection{Ehresmann Connection and Horizontal Lift}
\textls[-25]{We recall the case of the linear connection $\nabla$; it was uniquely determined by a~parallel transport~procedure.}

In the case of a~principal connection, we have an analogous.

\begin{Definition}[Lift]
Let $\pi:E\to M$ be a~fiber bundle, $M$ be a~differentiable manifold, $x\in M$ and $e\in E$ such that $\pi(e)=x$.

\textls[-25]{Given a~smooth curve $\gamma:\mathbb R\to M$ such that $\gamma(0)=x$, we define a~lift of $\gamma$ through $e$ as the curve $\tilde\gamma$, satisfying}
\begin{equation}
\tilde\gamma(0)=e\quad\text{and}\quad\pi(\tilde\gamma(t))=\gamma(t) \hspace{0.3cm} \forall t.
\end{equation}

If $E$ is smooth, then a~lift is horizontal if every tangent to $\tilde\gamma$ lies in a~fiber of $HE$, namely
\begin{equation}
\dot{\tilde\gamma}(t)\in HE_{\tilde\gamma(t)}\;\, \forall t\text{.}
\end{equation}
\end{Definition}

It can be shown that an Ehresmann connection uniquely determines a horizontal lift. Here, it is the analogy with parallel transport.

\subsection{Connection Form in a~$G$-Principal Bundle}
We now focus on the case where the smooth fiber bundle is a~$G$-principal bundle with smooth action $\mathfrak P$.

Here, we need a~group $G$, that we generally take to be a~matrix Lie group. We then have the corresponding algebra $\mathfrak g$, a~matrix vector space in the present case.

The action $\mathfrak P$ defines a~map $\sigma:\mathfrak g\to \Gamma(VE)$ called the \textit{fundamental map}\footnote{It turns out that it is an isomorphism, since $\mathfrak P$ is regular.}, where at $p\in P$, for~an~element $\xi\in \mathfrak g$, it is given via the exponential map $Exp:\mathfrak g\to G$.

\begin{equation}
\sigma_p(\xi)=\dv{}{t} \mathfrak P_{e^{t\xi}}(p)\big|_{t=0}.
\end{equation}

The map is vertical because
\begin{equation}
\pi_*\sigma_p(\xi)=\dv{}{t} \pi(\mathfrak P_{e^{t\xi}}(p))\big|_{t=0}=\dv{}{t} \pi(p)=0.
\end{equation}

Thus, the vector $\sigma_p(\xi)$ is vertical and it is called the \textit{fundamental vector}.

Before proceeding, we need some Lie group theory.

\vspace{12pt}
\noindent{\textbf{Recall of Lie machinery\footnote{In this section, we take inspiration and follow \cite{OFarrill}.}:}}
Let $G$ be a~Lie group (a differentiable manifold) with $\mathfrak g$ as its Lie algebra and $\forall g,h\in G$. We define:

\begin{itemize}
\item[--] $L_g: G\to G$ and $R_g:G\to G$, such that $L_gh=gh$ and $R_gh=hg$ are the \textit{left} and \textit{right} actions, respectively;
\item[--] the adjoint map $\text{Ad}_g:G\to G$ via such left and right actions is $\text{Ad}_g\coloneqq L_g\circ R_{g^{-1}}$, namely $\text{Ad}_gh =ghg^{-1}$. It also acts on elements of the algebra $\xi\in \mathfrak g$ as $\text{Ad}_g:\mathfrak g\to \mathfrak g$ via the exponential~map\footnote{We stress that the exponential map is not an isomorphism for all Lie groups; thus, the elements generated by the exponential map belong, in general, to a~connected subgroup of the total group, which is usually homeomorphic to its simply connected double cover. More in general, the isomorphism is between a~subset of the algebra containing $0$ and a~subset of the group containing the identity. Moreover, for a~compact, connected, and simply connected Lie group, the algebra always generates the whole group via the exponential~map.}
\begin{equation}
\begin{split}
\text{Ad}_g\xi&=\dv{}{t} \big((L_g\circ R_{g^{-1}})(e^{t\xi})\big)\big|_{t=0}=\dv{}{t} (ge^{t\xi}g^{-1})\big|_{t=0}\\[4pt]
&=g\xi g^{-1}\hspace{0.1cm}\in\mathfrak g,
\end{split}
\end{equation}
where the last two equalities hold in the present case of matrix Lie groups.
This is not to be confused with the adjoint action $\text{ad}:\mathfrak g\times\mathfrak g\to \mathfrak g$, which is generated by the derivative of the adjoint map with $g=e^{t\chi}$ and $\chi\in\mathfrak g$, such that $\text{ad}_\chi\xi=[\chi,\xi]$;
\item[--] the left invariant vector fields $v\in \Gamma(TG)$ as $L_{g*} \circ v=v$, namely $v(g)=L_{g*} v(e)$; 
\item[--] the \textit{Maurer--Cartan form} is the left invariant $\mathfrak g$-valued $1$-form $\theta\in\Omega^1(G,\mathfrak g)$ defined by its values at~$g$.
\begin{equation}
\theta_g\coloneqq L_{g^{-1}*}:T_gG\to T_eG\cong\mathfrak g.
\end{equation}

For any left invariant vector field $v$, it holds $\forall g\in G$ that $\theta_g(v(g))=v(e)$. Therefore, left invariant vector fields are identified by their values over the identity thanks to the Maurer--Cartan form $\theta$. So we can state (\cite{MF}) that this identification $v(e)\mapsto v$ defines an isomorphism between the space of left invariant vector fields on $G$ and the space of vectors in $T_eG$, thus, the Lie algebra $\mathfrak g$. For matrix Lie groups, it holds that $\theta_g=g^{-1}dg$. 
\end{itemize}

By definition, the action of an element of the group on $P$ is $\mathfrak P_g:P\to P$, and therefore, it defines a~tangent map $\mathfrak P_g{_*}\hspace{-0.09cm}:TP\to TP$, for which the following Lemma holds:

\begin{Lemma}\label{lem1}
\begin{equation}
\mathfrak P_{g*} \circ\sigma(\xi)=\sigma(\text{Ad}_{g^{-1}}\xi).
\end{equation}
\end{Lemma}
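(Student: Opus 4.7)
The plan is to evaluate $\mathfrak P_{g*}\sigma_p(\xi)$ at a point $p\in P$ directly from the definition of the fundamental vector and the right-action property of $\mathfrak P$. The single nontrivial ingredient I need is the naturality of the exponential map under conjugation: for a matrix Lie group, $g\,e^{t\xi}g^{-1}=e^{t\,\mathrm{Ad}_{g}\xi}$, which rearranges to $g^{-1}e^{t\xi}g=e^{t\,\mathrm{Ad}_{g^{-1}}\xi}$.

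First I would commute the pushforward $\mathfrak P_{g*}$ with the derivative that defines the fundamental vector, using the elementary fact $\phi_*\dot\gamma(0)=\tfrac{d}{dt}(\phi\circ\gamma)(0)$:
\[\mathfrak P_{g*}\sigma_p(\xi)=\mathfrak P_{g*}\frac{d}{dt}\bigg|_{t=0}\mathfrak P_{e^{t\xi}}(p)=\frac{d}{dt}\bigg|_{t=0}\mathfrak P_g\bigl(\mathfrak P_{e^{t\xi}}(p)\bigr).\]
Next I would use that $\mathfrak P$ is a \emph{right} action, so $\mathfrak P_g\circ\mathfrak P_h=\mathfrak P_{hg}$. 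Inserting $g^{-1}g$ and applying the conjugation identity then gives
\[\mathfrak P_g\bigl(\mathfrak P_{e^{t\xi}}(p)\bigr)=\mathfrak P_{e^{t\xi}g}(p)=\mathfrak P_{g\,e^{t\,\mathrm{Ad}_{g^{-1}}\xi}}(p)=\mathfrak P_{e^{t\,\mathrm{Ad}_{g^{-1}}\xi}}\bigl(\mathfrak P_g(p)\bigr),\]
where the last equality again uses the right-action property, this time in reverse. Differentiating at $t=0$ yields exactly $\sigma_{\mathfrak P_g(p)}(\mathrm{Ad}_{g^{-1}}\xi)$, which is the claim.

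There is no substantive obstacle: the only thing to be careful about is why $\mathrm{Ad}_{g^{-1}}$ appears rather than $\mathrm{Ad}_g$. This is forced by the right-action convention $\mathfrak P_g\circ\mathfrak P_h=\mathfrak P_{hg}$, which reverses the order of group multiplication when passing through the action; had $\mathfrak P$ been a left action one would instead obtain $\mathrm{Ad}_g$. Beyond this bookkeeping, every step is a formal manipulation, and the conjugation identity is immediate since $G$ is taken to be a matrix Lie group throughout the paper.
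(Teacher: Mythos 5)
Your proof is correct and follows essentially the same route as the paper's: both rest on commuting the pushforward past the $t$-derivative, the right-action composition rule, and the conjugation identity $g^{-1}e^{t\xi}g=e^{t\,\mathrm{Ad}_{g^{-1}}\xi}$. The only cosmetic difference is that you insert $gg^{-1}$ at the level of group elements inside a single $\mathfrak P_{e^{t\xi}g}$, while the paper inserts $\mathfrak P_{g^{-1}}\circ\mathfrak P_{g}$ at the level of maps on $P$; the computations are identical in substance.
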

\begin{proof}
At $p\in P$
\begin{equation}
\mathfrak P_{g*} \sigma_p(\xi)=\dv{}{t} \big((\mathfrak P_g\circ\mathfrak P_{e^{t\xi}})(p)\big)\big|_{t=0}=\dv{}{t} \big((\mathfrak P_g\circ\mathfrak P_{e^{t\xi}}\circ\mathfrak P_{g^{-1}}\circ\mathfrak P_g)(p)\big)\big|_{t=0}\text{,}
\end{equation}
we then use the fact that $\mathfrak P_{g}\circ\mathfrak P_{e^{t\xi}}\circ\mathfrak P_{g^{-1}}=\mathfrak P_{g^{-1}e^{t\xi}g}=\mathfrak P_{\text{Ad}{_{g^{-1}}}} e^{t\xi}$ and the identity for matrix groups $\text{Ad}_ge^{t\xi}=e^{t\text{Ad}_{g}\xi}$ to get the following:
\begin{equation}
\mathfrak P_{g*}\sigma_p(\xi)=\dv{}{t} \big(\mathfrak P_{e^{t(\text{Ad}_{g^{-1}}\xi)}}(\mathfrak P_g(p))\big)\big|_{t=0}=\sigma_{\mathfrak P_g\hspace{-0.04cm}(p)}\hspace{-0.04cm}(\text{Ad}_{g^{-1}}\xi).
\end{equation}
\end{proof}

It is time to define what we were aiming to define at the beginning of the section: the \textit{connection form}.

\begin{Definition}
Let $P$ be a~smooth $G$-principal bundle and $HE\subset TP$ be an Ehresmann connection.\\
We call the $\mathfrak g$-valued $1$-form $\omega\in\Omega^1(P,\mathfrak g)$, satisfying
\begin{equation}
\label{eq:defconnform}
\omega(v)=\begin{cases} 
			\xi \hspace{0.7cm} if \hspace{0.2cm} 				v=\sigma(\xi), \hspace{0.1cm} \xi\in\mathcal 			C^\infty(P,\mathfrak g)\\
			
			0 \hspace{0.7cm} if \hspace{0.2cm}
			v \quad\text{horizontal},
			\end{cases}
\end{equation}
the connection $1$-form.
\end{Definition}

\begin{Proposition}
\begin{equation}
\mathfrak P_g^*\omega=\text{Ad}_{g^{-1}}\circ\omega.
\end{equation}
\end{Proposition}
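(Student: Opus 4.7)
The plan is to verify the identity pointwise: fix $p\in P$ and a tangent vector $v\in T_pP$, and show that $(\mathfrak{P}_g^*\omega)_p(v) = \text{Ad}_{g^{-1}}(\omega_p(v))$. Since both sides are $\mathbb{R}$-linear in $v$, and since the Ehresmann connection gives the splitting $T_pP = HE_p \oplus VE_p$, it suffices to verify the equation separately on vertical and horizontal vectors.

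For the vertical case, I would take $v = \sigma_p(\xi)$ for some $\xi\in\mathfrak{g}$. Unwinding the pullback, $(\mathfrak{P}_g^*\omega)_p(\sigma_p(\xi)) = \omega_{\mathfrak{P}_g(p)}\bigl(\mathfrak{P}_{g*}\sigma_p(\xi)\bigr)$. By Lemma \ref{lem1}, $\mathfrak{P}_{g*}\sigma_p(\xi) = \sigma_{\mathfrak{P}_g(p)}(\text{Ad}_{g^{-1}}\xi)$, which is again a fundamental vector, so the defining property \eqref{eq:defconnform} of $\omega$ evaluates it to $\text{Ad}_{g^{-1}}\xi$. On the other hand, $\text{Ad}_{g^{-1}}(\omega_p(\sigma_p(\xi))) = \text{Ad}_{g^{-1}}\xi$, and the two sides agree.

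For the horizontal case, I would take $v\in HE_p$. Here $\omega_p(v) = 0$, so the right-hand side is $\text{Ad}_{g^{-1}}(0) = 0$. For the left-hand side, I need $\mathfrak{P}_{g*}v$ to be horizontal at $\mathfrak{P}_g(p)$, i.e.\ $\mathfrak{P}_{g*}(HE_p) \subset HE_{\mathfrak{P}_g(p)}$. This $G$-equivariance of the horizontal distribution is the distinguishing feature of a \emph{principal} (as opposed to a generic Ehresmann) connection and is to be assumed as part of the setup — once granted, $\omega(\mathfrak{P}_{g*}v) = 0$ and the identity holds trivially.

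The main subtlety, and the place where care is needed, is precisely the horizontal case: the proposition relies implicitly on the horizontal bundle being $G$-invariant. Without that assumption the statement fails. Everything else reduces to a direct application of Lemma \ref{lem1} and the definition of $\omega$, with no further computation required.
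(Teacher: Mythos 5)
Your proof is correct and follows the same route as the paper: the vertical case is handled via Lemma \ref{lem1} and the defining property \eqref{eq:defconnform} of $\omega$, exactly as in the text, which simply declares the horizontal case ``trivial.'' You are right that the horizontal case in fact requires the $G$-invariance $\mathfrak{P}_{g*}(HE_p)\subset HE_{\mathfrak{P}_g(p)}$ of the horizontal distribution --- a condition not contained in the paper's definition of an Ehresmann connection and only implicitly acknowledged in the Remark following the Proposition --- so your explicit flagging of that hypothesis is a genuine improvement rather than a digression.
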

\begin{proof}
Suppose $v=\sigma(\xi)$, since the other case left is trivial.

We can carry out some calculations on the left-hand side, and following from the definition of pull-back and Lemma \ref{lem1}, we have
\begin{equation}
\big(\mathfrak P_g^*\omega\big)\big(\sigma(\xi)\big)=\omega\big(\mathfrak P_{g*}\circ\sigma(\xi)\big)=\omega\big(\sigma(\text{Ad}_{g^{-1}}(\xi)\big)=\text{Ad}_{g^{-1}}(\xi).
\end{equation}

Then, we only need to manipulate the right-hand side as
\begin{equation}
\text{Ad}_{g^{-1}}\big(\omega(\sigma(\xi))\big)=\text{Ad}_{g^{-1}}(\xi).\end{equation}

Both times, we used just the given definition of connection $1$-form (Equation \eqref{eq:defconnform}).
\end{proof}

\begin{Remark}
This last Proposition is called \textit{$G$-equivariance}. It can be imposed instead of by assuming that $HE$ is an~Ehresmann connection, and then $HE$ can be shown to be such an Ehresmann connection. 
\end{Remark}

Another fundamental concept is given in the following:

\begin{Definition}[Tensorial form]
Let $\rho:G\to \operatorname{Aut}(V)$ be a~representation over a~vector space $V$ and $\alpha\in\Omega^k(P,V)$ be a~vector valued differential form.

We call $\alpha$ a~tensorial form if it is the following:
\begin{itemize}
\item[--] horizontal, i.e., $\alpha(v_1,...,v_k)=0$ if at least one $v_i$ is a~vertical vector field, and
\item[--] equivariant, i.e., for all $g\in G$, $\mathfrak P^*_g\alpha=\rho({g^{-1}})\circ \alpha$.
\end{itemize}

We define horizontal and equivariant forms as maps belonging to $\Omega^k_G(P,V)$. 
\end{Definition}

\noindent{\textbf{Observation 5:}} The connection form $\omega$ is not, in general, horizontal; thus, it is not a~tensorial form, $\omega\notin\Omega^1_G(P,\mathfrak g)$. This will be clear when taking into account how the gauge field transforms under a~change of trivialization in {Section} \ref{sec4}.

\subsection{Curvature Forms}
Given our connection $1$-form $\omega$, we can proceed in two ways: the first consists in taking a~map called the \textit{horizontal projection} and in defining the curvature as this projection applied on the exterior derivative of $\omega$. In this way, we naturally see that curvature measures the displacement of the commutator of two vectors from being horizontal.

\textls[-20]{We will proceed in a~different way though. We will define the curvature through a~\textit{structure equation}.}

\begin{Definition}
Given $\omega\in\Omega^1(P,\mathfrak g)$, a~principal connection $1$-form, the $2$-form $\Omega \in \Omega_G^2(P, \mathfrak g)$ satisfies the following:
\begin{equation}
\label{eq:defcurvform}
\Omega=d\omega+\frac{1}{2}[\omega\wedge\omega]\end{equation}
whic is called curvature $2$-form.
\end{Definition}

In Equation \eqref{eq:defcurvform}, $[\omega\wedge\omega]$ denotes the bilinear operation on the Lie algebra $\mathfrak g$ called differential Lie bracket. It is defined as follows:
\begin{equation}
[\omega\wedge\eta](u,v)=\frac{1}{2}\big([\omega(u),\eta(v)]-[\omega(v),\eta(u)]\big),
\end{equation}
where $u$ and $v$ are vector fields.

It follows straightforwardly that, if we take two general horizontal vector fields $u,v\in\Gamma(HE)$ and we use the ordinary formula\footnote{Here, we regard $\omega(u)$ as a~function $\omega(u):P\to \mathfrak g$ belonging to the algebra of smooth functions to $\mathfrak g$, $\mathcal C^\infty(P,\mathfrak g)$.} for the exterior derivative of a~$1$-form $d\omega(u,v)=u\omega(v)-v\omega(u)-\omega([u,v])$, since $\omega(u)=\omega(v)=0$, we get
\begin{equation}
\Omega(u,v)=-\omega([u,v]).
\end{equation}

We see that $\Omega$ measures how the commutator of two horizontal vector fields is far from being horizontal as well. 

\section{Exterior Covariant Derivative}\label{sec4}

\subsection{For an Ehresmann Connection $HE$}

\noindent{\textbf{Observation 6:}} $\Omega^k_G(P,V)$ is not closed under the ordinary exterior derivative. In that sense, if $\alpha\in\Omega^k_G(P,V)$, then $d\alpha\notin\Omega^{k+1}_G(P,V)$. This is what a~covariant differentiation will do instead. \\

The idea of a~covariant exterior derivative for a~connection $HE$ is, given such an Ehresmann connection $HE$, the one of projecting vector fields onto this horizontal bundle and then feed our ordinary exterior derivative with such horizontal vector fields.

First of all, we define a~map acting as a~pull-back. Namely that, given a~map $h:TP\to HE$ such that, for all vertical vector fields $v$, we get $h\circ v\coloneqq hv=0$ (called the \textit{horizontal projection}), we~define the dual map $h^*:T^*P	\to HE^*$ such that, for $\alpha\in\Omega^1(P,V)$ and $V$ a~vector space, we have $h^*\circ\alpha\coloneqq h^*\alpha=\alpha\circ h$.

\begin{Definition}[$d^h$]
Let $P$ be a~$G$-principal bundle, $V$ be a~vector space, and $\alpha\in\Omega^k(P,V)$ be an equivariant form. We define the exterior covariant derivative $d^h$ as a~map $d^h:\Omega^k(P,V)\to\Omega^{k+1}_G(P,V)$ such that
\begin{equation}
\label{eq:hexteriorcovariantderivative}
d^h\alpha(v_0,...,v_k)\coloneqq h^*d\alpha(v_0,...,v_k)=d\alpha(hv_0,...,hv_k),
\end{equation}
where $v_0,...,v_k$ are vector fields.
\end{Definition}

It depends on the choice of our Ehresmann connection $HE$, which reflects onto the horizontal projection $h$; that is why we have the index ${}^h$.

\vspace{12pt}
\noindent{\textbf{Observation 7:}} We can make our covariant derivative depend only on $\omega$, if we restrict it to only forms in $\Omega^k_G(P,V)$ and if we consider the representation of the algebra induced by the derivative of $\rho$ that we denote $d\rho:\mathfrak g\to \operatorname{End}(V)$. Then, we have $d\rho\circ\omega\in \Omega^k\big(P,\operatorname{End}(V)\big)$.

\subsection{For a~Connection Form $\omega$}

\begin{Definition}[$d_\omega$]
Let $P$ be a~$G$-principal bundle, $V$ be a~vector space, and $\alpha\in\Omega^k_G(P,V)$ be a~tensorial form. We~define the exterior covariant derivative $d_\omega$ as a~map $d_\omega:\Omega^k_G(P,V)\to\Omega^{k+1}_G(P,V)$ such that\footnote{For a~general $k$-form: $$(\omega\wedge_{d\rho}\alpha)(v_1,...,v_{k+1})=\frac{1}{(1+k)!}\sum_\sigma\text{sign($\sigma$)}d\rho\big(\omega(v_{\sigma(1)})\big)\big(\alpha(v_{\sigma(2)},...,v_{\sigma(k+1)}\big).$$}
\begin{equation}
\begin{array}{ll}
d_\omega\alpha&\coloneqq d\alpha+\omega\wedge_{d\rho}\alpha\\
&\coloneqq d\alpha+d\rho\circ\omega\wedge\alpha.
\end{array}
\end{equation}
\end{Definition}

\begin{Remark}\label{rem2}
\hspace{100pt}
\begin{itemize}
\item[--] We observe that $d^2_\omega\alpha\neq0$ for a~general $\alpha\in\Omega_G^k(P,V)$, but it is easy to show that it holds\footnote{See the first Bianchi identity in Equation \eqref{eq:firstbianchi} for the proof.}
\begin{equation}
\label{eq:propd2}
d^2_\omega\alpha=\Omega\wedge_{d\rho}\alpha,
\end{equation}
Thus, for a~\textit{flat connection} such that $\Omega=0$, we have $d^2_\omega\alpha=d^2\alpha=0$.
\item[--] We have observed that $\omega\notin\Omega_G^1(P,\mathfrak g)$. Therefore, $d_\omega\omega$ is not well defined.
However, we can consider $d^h\omega\in\Omega_G^2(P,\mathfrak g)$, and this is precisely our curvature $\Omega=d\omega+\frac{1}{2}[\omega\wedge\omega]$, where the anomalous $\frac{1}{2}$ factor comes from the "non-tensoriality" of $\omega$. As a~matter of fact, there is no representation that would make the $\frac{1}{2}$ term arise if we considered $d_\omega\omega$ instead.
\item[--] The fact that $d_\omega$ is not well defined for non-tensorial forms does not mean that $\omega$ defines a~less general derivative than what $d^h$ does. As a~matter of fact, $HE$ could be defined starting from $\omega$, as we mentioned above, since $HE=\text{Ker}\,\omega$.
\end{itemize}
\end{Remark}

\section{Gauge Field and Field Strength}

\subsection{Make It Clear}

\begin{Definition}[Gauge field]
Let $P\to M$ be a~$G$-principal bundle, $G$ be a~Lie group with $\mathfrak g$ as the respective Lie algebra, $\{U_\beta\}$ be  an open cover of $M$, and $s_\beta:U_\beta\to P$ be  a~section.

We define the gauge field as the pull-back of the connection form $\omega\in\Omega^1(P,\mathfrak g)$ as
\begin{equation}
\label{eq:gaugefield}
A_\beta=s^*_\beta\omega\hspace{0.1cm}\in\Omega^1(U_\beta,\mathfrak g).
\end{equation}
\end{Definition}

We notice that, under a~change of trivialization, such a~gauge field changes via the action of the adjoint map.

In fact, we have the following:

\begin{Lemma}
The restriction of $\omega$ to $\pi^{-1}(U_\beta)$ agrees with
\begin{equation}
\label{eq:omegarestr}
\omega_\beta=\text{Ad}_{g^{-1}_\beta}\circ \pi^*A_\beta+ g_\beta^*\theta,
\end{equation}
where $g_\beta:\pi^{-1}(U_\beta)\to G$ is the map induced by the inverse of the trivialization map $\varphi_\beta$ defined in Equation \eqref{eq:loctriv}, and with $\text{Ad}_{g^{-1}_\beta}$, we intend for the adjoint map at the group element given by $g_\beta(p)^{-1}$ at a~point $p\in\pi^{-1}(U_\beta)$.
\end{Lemma}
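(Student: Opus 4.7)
The plan is to evaluate both sides of the identity on an arbitrary tangent vector $X\in T_pP$ (for $p\in\pi^{-1}(U_\beta)$) and check they agree, using the canonical decomposition supplied by the trivialization $\varphi_\beta$.

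First I would introduce the canonical local section $s_\beta(x)\coloneqq\varphi_\beta(x,e)$; from \eqref{eq:commutationofactions} every $p\in\pi^{-1}(U_\beta)$ satisfies $p=\mathfrak P_{g_\beta(p)}(s_\beta(\pi(p)))$, and $g_\beta(\mathfrak P_h(p))=g_\beta(p)h$. Given $X\in T_pP$, I would use the product structure in the trivialization to split $X=X_1+X_2$ where $X_1$ is the tangent to a curve varying $\pi(p)$ with $g_\beta$ held fixed, and $X_2$ is tangent to a curve varying $g_\beta$ with $\pi(p)$ held fixed. Concretely, taking any curve $\gamma(t)$ with $\gamma(0)=p$, $\dot\gamma(0)=X$, and writing $x(t)=\pi(\gamma(t))$, $g(t)=g_\beta(\gamma(t))$ with $g=g(0)$, one gets
\begin{equation*}
X_1=\mathfrak P_{g*}\,s_{\beta*}\,\pi_*X,\qquad X_2=\dv{}{t}\Big|_{t=0}\mathfrak P_{g(t)}(s_\beta(x)).
\end{equation*}
By linearity of $\omega$ it suffices to show $\omega(X_1)=(\text{Ad}_{g_\beta^{-1}}\circ\pi^*A_\beta)(X)$ and $\omega(X_2)=(g_\beta^*\theta)(X)$.

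For the horizontal-looking piece $X_1$, I would invoke the $G$-equivariance of $\omega$ proved above, $\mathfrak P_g^*\omega=\text{Ad}_{g^{-1}}\circ\omega$, to obtain
\begin{equation*}
\omega(X_1)=\text{Ad}_{g^{-1}}\bigl(\omega(s_{\beta*}\pi_*X)\bigr)=\text{Ad}_{g^{-1}}\bigl((s_\beta^*\omega)(\pi_*X)\bigr)=\bigl(\text{Ad}_{g_\beta^{-1}}\circ\pi^*A_\beta\bigr)(X),
\end{equation*}
using the definition \eqref{eq:gaugefield} of $A_\beta$. This step is the easy one.

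The main obstacle is identifying $X_2$ correctly as a fundamental vector field; this requires care with the right-action convention. Because $\mathfrak P$ is a right action, $\mathfrak P_{g(t)}(s_\beta(x))=\mathfrak P_{g^{-1}g(t)}(p)$, so setting $h(t)\coloneqq g^{-1}g(t)=L_{g^{-1}}(g(t))$ one has $h(0)=e$ and $X_2=\sigma_p(\dot h(0))$ is a fundamental vector generated by $\dot h(0)=L_{g^{-1}*}\dot g(0)\in T_eG\cong\mathfrak g$. Applying the defining property \eqref{eq:defconnform} of $\omega$ on vertical vectors and the formula $\theta_g=L_{g^{-1}*}$ for the Maurer--Cartan form gives
\begin{equation*}
\omega(X_2)=\dot h(0)=L_{g^{-1}*}\bigl(g_{\beta*}X\bigr)=\theta_{g_\beta(p)}\bigl(g_{\beta*}X\bigr)=(g_\beta^*\theta)(X).
\end{equation*}
Summing the two contributions yields \eqref{eq:omegarestr}. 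The key subtlety to flag is precisely the left-versus-right translation that arises because $\mathfrak P$ is a right action and $\theta$ is the \emph{left}-invariant Maurer--Cartan form; this is what makes the formula involve $g_\beta^*\theta$ rather than its right-invariant analogue.
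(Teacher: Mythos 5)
Your proof is correct and complete, but it takes a genuinely different route from the paper's. The paper disposes of the lemma in one sentence: it checks that the two sides of \eqref{eq:omegarestr} agree on the splitting $TE=HE\oplus VE$, i.e., that both vanish on horizontal vectors and both return $\xi$ on a fundamental vector $\sigma(\xi)$, then invokes linearity. You instead decompose an arbitrary $X\in T_pP$ along the splitting induced by the trivialization --- a piece $X_1=\mathfrak P_{g*}s_{\beta*}\pi_*X$ pushed forward from the canonical section $s_\beta$, plus a fiber-direction remainder $X_2$ --- and evaluate $\omega$ on each using $G$-equivariance and the fundamental-vector clause of \eqref{eq:defconnform} respectively. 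These decompositions are not the same: $X_1$ is in general not horizontal, since the section $s_\beta$ need not be horizontal. What your version buys is that it is self-contained: in the paper's route, the claim that the right-hand side of \eqref{eq:omegarestr} vanishes on a horizontal vector is not term-by-term obvious (neither $\pi_*X$ nor $g_{\beta*}X$ vanishes for horizontal $X$; only the sum cancels), and verifying that cancellation essentially requires the computation you carry out anyway. Your identification $X_2=\sigma_p\bigl(L_{g^{-1}*}\dot g(0)\bigr)$, with the left translation appearing because $\mathfrak P$ is a right action and $\theta$ is the left-invariant Maurer--Cartan form, is exactly the step where a side/sign error would creep in, and you handle it correctly.
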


The proof comes from the observation that Equations \eqref{eq:defconnform} and \eqref{eq:omegarestr} coincide in $\pi^{-1}(U_\beta)$ for both a horizontal (for which they are zero) and a~vertical vector field.

Thanks to this, we easily have the following:

\begin{Proposition}
Let $G$ be a~matrix Lie group. Then it holds the following transformation for a~gauge field:
\begin{equation}
\label{eq:tensorialgf}
A_\beta=g_{\beta\gamma} A_\gamma g^{-1}_{\beta\gamma}-dg_{\beta\gamma}g^{-1}_{\beta\gamma}.
\end{equation}
\end{Proposition}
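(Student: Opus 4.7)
The plan is to apply the preceding Lemma on the overlap $\pi^{-1}(U_\beta\cap U_\gamma)$, where the two local expressions for $\omega$ must coincide, and then pull the resulting identity back to the base by a convenient section. Writing the Lemma for both trivializations, we have on that overlap
\begin{equation*}
\text{Ad}_{g_\beta^{-1}}\circ \pi^*A_\beta + g_\beta^*\theta \;=\; \text{Ad}_{g_\gamma^{-1}}\circ \pi^*A_\gamma + g_\gamma^*\theta.
\end{equation*}
The goal is to turn this equality of $\mathfrak g$-valued forms on $P$ into a relation among forms on the base.

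First I would fix the canonical section $s_\gamma:U_\gamma\to P$ associated with the trivialization $\varphi_\gamma$, characterised by $\varphi_\gamma^{-1}\circ s_\gamma=(\mathrm{id},e)$, so that $g_\gamma\circ s_\gamma=e$ (the constant map to the identity of $G$) and $\pi\circ s_\gamma=\mathrm{id}_{U_\gamma}$. Using the defining property of the transition functions from equation \eqref{eq:commutationofactions}, a short computation gives $g_\beta\circ s_\gamma=g_{\beta\gamma}$ on $U_\beta\cap U_\gamma$. Pulling the overlap identity back by $s_\gamma^*$ then kills the second term on the right-hand side (since $g_\gamma\circ s_\gamma$ is constant, its differential vanishes), reduces $\mathrm{Ad}_{g_\gamma^{-1}}$ at $s_\gamma(x)$ to the identity, and leaves $s_\gamma^*\pi^*A_\gamma = A_\gamma$.

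The remaining step is to evaluate the two pulled-back terms on the left. Because $G$ is a matrix Lie group, the Maurer--Cartan form has the explicit form $\theta_g=g^{-1}\,dg$, so for any smooth map $f:P\to G$ one has $f^*\theta=f^{-1}\,df$. Specialising to $f=g_\beta$ and then pulling back by $s_\gamma$ gives $s_\gamma^*g_\beta^*\theta = g_{\beta\gamma}^{-1}\,dg_{\beta\gamma}$. Similarly, $\mathrm{Ad}_{g_\beta^{-1}}\xi = g_\beta^{-1}\xi g_\beta$ for matrix groups, so the first term becomes $g_{\beta\gamma}^{-1}A_\beta g_{\beta\gamma}$. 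Collecting the results, the pulled-back identity reads
\begin{equation*}
g_{\beta\gamma}^{-1}A_\beta g_{\beta\gamma} + g_{\beta\gamma}^{-1}\,dg_{\beta\gamma} \;=\; A_\gamma,
\end{equation*}
and conjugating by $g_{\beta\gamma}$ (while using $dg_{\beta\gamma}\,g_{\beta\gamma}^{-1}=-g_{\beta\gamma}\,d(g_{\beta\gamma}^{-1})$ only if needed to tidy signs) yields precisely equation \eqref{eq:tensorialgf}.

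The only real obstacle I anticipate is being careful with the bookkeeping: identifying the correct section to pull back with, checking that $g_\beta\circ s_\gamma$ reproduces the transition function rather than its inverse (a convention choice that can flip signs in the inhomogeneous term), and making sure the Maurer--Cartan pullback is handled consistently at points where $g_\beta$ is not the identity. Everything else reduces to the matrix-group identities $\theta_g=g^{-1}dg$ and $\mathrm{Ad}_g\xi=g\xi g^{-1}$, which are already quoted in the Lie machinery recap.
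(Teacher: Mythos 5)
Your proposal is correct and follows essentially the same route as the paper: both rest on the Lemma giving $\omega|_{\pi^{-1}(U_\beta)}=\text{Ad}_{g_\beta^{-1}}\circ\pi^*A_\beta+g_\beta^*\theta$, a pullback by a canonical section, and the matrix-group identities $\theta_g=g^{-1}dg$ and $\text{Ad}_g\xi=g\xi g^{-1}$. The only cosmetic difference is that you pull the overlap identity back by $s_\gamma$ and then solve for $A_\beta$ by conjugation, whereas the paper pulls $\omega_\gamma$ back directly by $s_\beta$ (with the convention $g_{\beta\gamma}\coloneqq g_\gamma\circ s_\beta$), which lands on the formula in one step.
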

\begin{proof}
Using Equations \eqref{eq:gaugefield} and \eqref{eq:omegarestr} for all $x\in U_\beta\cap U_\gamma$,
\begin{equation}
\begin{split}
A_\beta&=s^*_\beta\omega \\
&=s^*_\beta\omega_\beta=s^*_\beta\omega_\gamma \\
&=s^*_\beta\big(\text{Ad}_{g^{-1}_\gamma}\circ\pi^*A_\gamma+g^*_\gamma\theta\big)\\
 &= \text{Ad}_{g^{-1}_{\beta\gamma}}\circ A_\gamma+g^*_{\gamma\beta}\theta \hspace{1.25cm} \text{(using $g_\gamma\circ s_\beta\coloneqq g_{\beta\gamma}:U_\beta\cap U_\gamma\to G$)}\\%
&= \text{Ad}_{g_{\beta\gamma}}\circ\big(A_\gamma-g^*_{\beta\gamma}\theta\big)\hspace{1cm} \text{$ (\text{Ad}_{g_{\beta\gamma}}\circ g^*_{\beta\gamma}\theta=-g^*_{\gamma\beta}\theta$),}\\%
\end{split}
\end{equation}
which reduces to the assert for matrix Lie groups.
\end{proof}

\vspace{12pt}
\noindent{\textbf{Observations 8:}}
\begin{enumerate}
\item We observe that a~local gauge transformation of the gauge field corresponds to a~change of trivialization chart.
\item Non-tensoriality of $\omega$ was given by the fact that it is, in general, not horizontal. For the gauge field $A$, we can generalize to forms on $M$ the concept of tensoriality/non-tensoriality by~noticing that a~form obtained by the pull-back of a~tensorial form, denoted with $t\in\Omega^1_G(P,V)$, would~transform differently compared to $A$, namely as 
\begin{equation}
\label{eq:horform}
t_\beta\coloneqq s_\beta^*t=g_{\beta\gamma} t_\gamma g^{-1}_{\beta\gamma}.
\end{equation}

The Maurer--Cartan form $\theta$ reflects the non-horizontality of $\omega$ to the gauge field, from Equation~\eqref{eq:omegarestr}.
\item A difference of two gauge fields like $A-A'$ transforms as Equation \eqref{eq:horform}. In fact, the transformation rule is one of a~tensorial form, since the Maurer--Cartan forms simplify.
\item We notice that (iii) is a~particular case of a~more general one. Indeed, it is possible to show with proof in Reference \cite{KN} (Chapter $5$) that $\Omega^k_G(P,V)\cong\Omega^k(M,P\times_\rho V)$. This is essentially due to the fact that, thanks to the equivalence relation of the associated bundle and the gluing condition of sections on overlaps, the pull-backs by sections $s_\beta:U_\beta\to P$ give a~one-to-one correspondence between these two spaces. Therefore, we can obtain forms with a~tensorial transformation like Equation \eqref{eq:horform} just by taking the pull-back of tensorial forms on $P$; these will be forms on $M$ with values into the associated bundle $P\times_\rho V$.
\item Observations {(iii)} and {(iv)} ensure that an object built with gauge fields $A_\beta\in\Omega^1(U_\beta,\mathfrak g)$ (which transform on overlaps by Equation \eqref{eq:tensorialgf}) will be in $\Omega^2(M, P\times_{\text{Ad}} \mathfrak g)$; see Observation $9$.
\end{enumerate}

\begin{Definition}[Field strength]
Let $P\to M$ be a~$G$-principal bundle, $G$ be a~Lie group with $\mathfrak g$ as the respective Lie algebra, $\{U_\beta\}$ be an open cover of $M$, and $s_\beta:U_\beta\to P$ be a~section.

We define the field strength as the pull-back of the curvature form $\Omega\in\Omega^2_G(P,\mathfrak g)$ as
\begin{equation}
F_\beta=s^*_\beta\Omega\hspace{0.1cm}\in\Omega^2_G(U_\beta,\mathfrak g),
\end{equation}
which, by definition of $\Omega$, is
\begin{equation}
F_\beta=dA_\beta+\frac{1}{2}[A_\beta\wedge A_\beta].
\end{equation}
\end{Definition}

Similarly to what we have done for the gauge field, we can show\footnote{Using the Cartan structure equation for $\theta$, $d\theta=-\frac{1}{2}[\theta,\theta]$.} that the field strength transforms as
\begin{equation}
\label{eq:tranffieldstrength}
F_\beta=\text{Ad}_{g_{\beta\gamma}}\circ F_\gamma=g_{\beta\gamma} F_\gamma  g_{\beta\gamma}^{-1},
\end{equation}
where the last equality holds for matrix Lie groups with $g$ and $g^{-1}$ in $G$. This is indeed the transformation of a~tensorial form, as in Equation \eqref{eq:horform}.

\vspace{10pt}
\noindent{\textbf{Observation 9:}} Thanks to our previous observation, i.e., there is a~canonical isomorphism between $\Omega^k_G(P,V)$ and $\Omega^k(M,P\times_\rho V)$, we can relate $\Omega$ and $F_\beta$ with a~form\footnote{Where we have introduced the notation $\Omega^k(M,P\times_{\text{Ad}}\mathfrak g)\coloneqq\Omega^k(M,\text{ad}P)$.} $F_A\in\Omega^2(M,\text{ad}P)$. Namely~there is a~canonical isomorphism sending $\Omega\in\Omega^2_G(P,\mathfrak g)$ to $F_A\in\Omega^2(M,\text{ad}P)$. Indeed, given the transformation law for the field strength in Equation \eqref{eq:tranffieldstrength}, we see that $\{F_\beta\}$ are horizontal and equivariant and, thus, form a~global section belonging to $\Omega^2(M,\text{ad}P)$, which is usually denoted as $F_A$. \\
The notation $F_A$ stresses that it is obtained from gauge fields in $\Omega^1(U_\beta,\mathfrak g)$.\\
In the case of a~trivial bundle, it is also possible to define a~global gauge field $A\in\Omega^1(M,\mathfrak g)$.

\subsection{2{nd} Bianchi Identity}
\begin{Definition} The collection of gauge fields defines an exterior covariant derivative for bundle-valued forms on $M$. We denote such a map with
\begin{equation}
\label{eq:derivativebundlevalued}
d_A:\Omega^{k}(M,P\times_\rho V)\to\Omega^{k+1}(M,P\times_\rho V)\text{.}
\end{equation}
\end{Definition}
Consider $d_A:\Omega^k(M,P\times_\rho V)\to\Omega^{k+1}(M,P\times_\rho V)$ as the exterior covariant derivative and $F_A\in\Omega^2(M,\operatorname{ad}P)$ as the field strength.

Then, we have the following:
\begin{Proposition}
\begin{equation}
\label{eq:secondbianchi}
d_A F_A=0.
\end{equation}
This is the second Bianchi identity.
\end{Proposition}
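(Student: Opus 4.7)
The plan is to prove the identity first upstairs on $P$, in the equivalent form $d_\omega \Omega = 0$, and then to pull it back by any local section $s_\beta$. By Observation 9, the canonical isomorphism $\Omega^k_G(P,\mathfrak g) \cong \Omega^k(M, \text{ad}P)$ is realized by pulling back along $s_\beta$, and it intertwines the exterior covariant derivative $d_\omega$ on $P$ with $d_A$ on $M$. Hence an identity for $\Omega$ on $P$ transfers directly to the asserted identity for $F_A$ on $M$, so all the work is at the level of the total space.

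The core is a short algebraic computation on $P$. Starting from the structure equation $\Omega = d\omega + \frac{1}{2}[\omega \wedge \omega]$ (Equation \eqref{eq:defcurvform}) and applying $d$, the term $d^2\omega$ vanishes; the graded Leibniz rule for $[\cdot \wedge \cdot]$, together with the graded symmetry $[\omega \wedge d\omega] = -[d\omega \wedge \omega]$ (since $\omega$ is of degree one and $d\omega$ of degree two), collapses $\frac{1}{2}\,d[\omega \wedge \omega]$ into $[d\omega \wedge \omega]$. Substituting $d\omega = \Omega - \frac{1}{2}[\omega \wedge \omega]$ back in, the iterated bracket $[[\omega \wedge \omega] \wedge \omega]$ is annihilated by the graded Jacobi identity applied to three copies of a degree-one form, leaving $d\Omega = -[\omega \wedge \Omega]$. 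Since $\Omega$ is tensorial and the relevant representation on $\mathfrak g$ is the adjoint (so that $d\rho = \text{ad}$ and $\omega \wedge_{d\rho} \Omega = [\omega \wedge \Omega]$), this is precisely $d_\omega \Omega = 0$. Pulling back by $s_\beta$, using $s^*_\beta \omega = A_\beta$ and $s^*_\beta \Omega = F_\beta$ together with naturality of $d$ and $[\cdot \wedge \cdot]$ under smooth pullback, one obtains $dF_\beta + [A_\beta \wedge F_\beta] = 0$ on each $U_\beta$; by the tensorial transformation law \eqref{eq:tranffieldstrength} these local identities assemble into the single global identity $d_A F_A = 0$ on $M$.

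The main obstacles are purely book-keeping: keeping sign conventions for the graded bracket $[\cdot \wedge \cdot]$ consistent with the factor $\frac{1}{2}$ appearing in the structure equation, and correctly invoking the graded Jacobi identity to kill the triple bracket $[[\omega \wedge \omega] \wedge \omega]$. There is no analytic content, and no appeal beyond the definitions already established; the whole statement reduces to these standard algebraic identities in the exterior algebra of $\mathfrak g$-valued forms on $P$.
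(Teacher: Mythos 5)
Your proof is correct, and the algebraic core is the same as the paper's: expand the structure equation, use $d^2=0$, use the graded Leibniz rule together with $[\omega\wedge d\omega]=-[d\omega\wedge\omega]$ to reduce $\tfrac{1}{2}d[\omega\wedge\omega]$ to $[d\omega\wedge\omega]$, and kill the remaining triple bracket with the graded Jacobi identity. The difference is where you perform it. The paper computes directly downstairs, writing $F_A=dA+\tfrac{1}{2}[A\wedge A]$ and expanding $d_AF_A=dF_A+[A\wedge F_A]$; strictly speaking this presumes a single (global or local) gauge field $A$, so on a non-trivial bundle it is a computation on one trivializing patch, with the global statement following because $d_AF_A$ is a well-defined element of $\Omega^3(M,\operatorname{ad}P)$. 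You instead prove $d_\omega\Omega=0$ upstairs on $P$ and then descend through the isomorphism $\Omega^k_G(P,\mathfrak g)\cong\Omega^k(M,\operatorname{ad}P)$ of Observation~9, checking that pullback by sections intertwines $d_\omega$ with $d_A$ and that the local identities $dF_\beta+[A_\beta\wedge F_\beta]=0$ glue via the tensorial transformation law. What your route buys is a manifestly global, trivialization-free statement and an explicit justification of the descent that the paper leaves implicit; what the paper's route buys is brevity, since it never has to invoke the isomorphism with equivariant horizontal forms on $P$. Both arguments rest on exactly the same three algebraic facts, so there is no gap in either.
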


\begin{proof}
Given
\begin{equation}
F_A=dA+\frac{1}{2}[A\wedge A],
\end{equation}
then
\begin{equation}
\begin{split}
d_AF_A&=dF_A+[A\wedge F_A]\\
&=d^2A+\frac{1}{2}d[A\wedge A]+[A\wedge dA]+\frac{1}{2}[A\wedge[A\wedge A]]\\
&=\frac{1}{2}[A\wedge[A\wedge A]] \hspace{2cm} \text{($d^2A=0$ and $\frac{1}{2}d[A\wedge A]=-[A\wedge dA]$)}\\%
&=0.\hspace{4.021cm}\text{(because of Jacobi identity)}\\%
\end{split}
\end{equation}
\end{proof}

\section{Affine Formulation}
In the usual formulation of General Relativity, one defines a~covariant derivative $\nabla$, which is a~map among tensors. Then, one can define curvature and torsion and eventually get the field equations for ECSK theory or General Relativity by setting torsion to zero.

One may wonder if this latter formulation is equivalent to the one we have been implementing through principal bundles and principal connection.

The answer is positive and is given in the next two sections.
\subsection{Affine Covariant Derivative}\label{sec6dot1}
We have built our setup by taking $\rho$ to be the fundamental representation of $\text{O}(3,1)$, $P=F_O(M)$, and $\mathcal V=F_O(M)\times_{\rho}V$ to be the Minkowski bundle, as in $(ii)$ of {Observations} $4$.\\
We are now ready to have a view of the problem from another perspective.
\begin{Definition}
Let $\{U_\beta\}$ be an open cover of $M$. We define a local connection as a $\Lambda^2\mathcal V$-valued differential form $\omega_\beta\in\Omega^1(U_\beta,\Lambda^2\mathcal V)$. Then we define the space of local connections as $$\mathcal A_M \coloneqq\big\{ \omega_\beta\in\Omega^1(U_\beta,\Lambda^2\mathcal V)\big\}.$$
\end{Definition}
In literature these local connections are also called \textit{spin connections}.\\
We come now to the following Proposition.
\begin{Proposition}
There exists an isomorphism
\begin{equation}
\label{eq:isoalgebrawedge2}
\Lambda^2 V\xrightarrow{\sim} \mathfrak{so}(3,1)
\end{equation}
induced by the reference metric $\eta$. 
\end{Proposition}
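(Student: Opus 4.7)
The plan is to exhibit an explicit linear map $\phi\colon\Lambda^2 V\to\mathfrak{so}(3,1)$ built from $\eta$, verify that its image actually lies in $\mathfrak{so}(3,1)$, and then conclude via a dimension count together with injectivity.

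First I would observe that both sides have the same dimension: $\dim\Lambda^2 V=\binom{4}{2}=6$, while $\mathfrak{so}(3,1)$, being the Lie algebra of the six-dimensional Lorentz group, is also $6$-dimensional. So it suffices to produce an injective (equivalently surjective) linear map.

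Next, on decomposable elements set
\begin{equation*}
\phi(v\wedge w)(u)\coloneqq \eta(v,u)\,w-\eta(w,u)\,v,\qquad u\in V,
\end{equation*}
and extend by linearity to all of $\Lambda^2 V$. The right-hand side is bilinear and antisymmetric in $(v,w)$, so it descends unambiguously from $V\otimes V$ to $\Lambda^2 V$. To check that $\phi(v\wedge w)\in\mathfrak{so}(3,1)$ I would compute
\begin{equation*}
\eta\bigl(\phi(v\wedge w)(u_1),u_2\bigr)=\eta(v,u_1)\eta(w,u_2)-\eta(w,u_1)\eta(v,u_2),
\end{equation*}
which is manifestly antisymmetric in $(u_1,u_2)$; this is exactly the defining condition $\eta(Au_1,u_2)+\eta(u_1,Au_2)=0$ for membership in $\mathfrak{so}(3,1)$.

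For injectivity (hence, by dimension count, bijectivity) I would pick an $\eta$-orthonormal basis $\{v_a\}$ of $V$ and compute the images $\phi(v_a\wedge v_b)$ for $a<b$. In components these are the standard Lorentz generators with matrix entries proportional to $\eta_{ac}\delta^d_b-\eta_{bc}\delta^d_a$, which are linearly independent in $\mathfrak{so}(3,1)$; since $\{v_a\wedge v_b\}_{a<b}$ is a basis of $\Lambda^2V$, this forces $\phi$ to be an isomorphism. The dependence on $\eta$ enters precisely in the ``index lowering'' used in the definition of $\phi$, which is why the statement says the isomorphism is \emph{induced by} $\eta$. No step is a real obstacle; the only subtlety is to make sure $\eta$ is used consistently on one slot of the wedge so that the antisymmetry computation above closes, which then automatically places $\phi(v\wedge w)$ in $\mathfrak{so}(3,1)$.
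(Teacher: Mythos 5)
Your proposal is correct and follows essentially the same route as the paper: both identify the basis bivectors $v_a\wedge v_b$ with the standard Lorentz generators whose matrix entries are $\eta_{ac}\delta^d_b-\eta_{bc}\delta^d_a$, using $\eta$ to lower/raise an index, and conclude by recognizing these as a basis of $\mathfrak{so}(3,1)$. Your write-up is merely more explicit than the paper's (which asserts the generator identification directly), adding the antisymmetry check and the dimension count that the paper leaves implicit.
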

\begin{proof}
Given $\eta=\text{diag}(1,1,1,-1)$, consider a basis of $V$ given by $\{v_i\}_{i=1}^4$. A basis for $\Lambda^2 V$ is obtained by taking the wedge product. Then the proof is an immediate consequence of the fact that the elements $L_i{^j}$ induced by the map $\eta:\Lambda^2 V\to\mathfrak{so}(3,1)$ via
\begin{equation}
L_i{^j} = v_i\wedge v_k\eta^{jk} 
\end{equation}
correspond to a basis of the Lie algebra.
\end{proof}

This isomorphism allows to obtain a different approach to the bundle-valued derivative defined in Equation \eqref{eq:derivativebundlevalued} by means of such local connections. In fact, instead of the collection of gauge fields, one can consider the space of local connections $\mathcal A_M$.
\begin{Definition}
Let $\rho:\operatorname{O}(3,1)\to\operatorname{Aut}(V)$ be the fundamental representation. Then the space of local connections $\mathcal A_M$ and the isomorphism \eqref{eq:isoalgebrawedge2} allow to define a bundle-valued exterior covariant derivative
\begin{equation}
\label{eq:ultimatederivative}
d_\omega:\Omega^k(M,\mathcal V)\to\Omega^{k+1}(M,\mathcal V).
\end{equation}
\end{Definition}

Notice that this is denoted in the very same way of the exterior covariant derivative for differential forms on principal bundles. However, the distinction will be always evident.

We can further define another kind of derivative that ``takes care'' of internal indices only; in particular, this will not be necessarily a~map between differential forms. 

\begin{Definition}
Given $\omega\in\mathcal A_M$ and $\phi \in \Gamma(\mathcal V)$ restricted to the same neighbourhood, we define $(D_\omega)_\mu$ as
\begin{equation}
(D_\omega\phi)_\mu^a=(\partial_\mu\phi^a+\omega_\mu^{ac}\eta_{cb}\phi^b)
\end{equation}

and, for $\alpha\in\Omega^k(M,\mathcal V)$, we have,
\begin{equation}
\label{eq:Domegaforgeneralform}
(D_\omega\alpha)_{\mu\nu_1...\nu_k}^a=(\partial_\mu\alpha_{\nu_1...\nu_k}^a+\omega_\mu^{ac}\eta_{cb}\alpha_{\nu_1...\nu_k}^b).
\end{equation}
\end{Definition}
Equation \eqref{eq:Domegaforgeneralform} shows that it does not map $\alpha$ to a~differential form.\\
The antisymmetry of $\omega$ ensures the metric compatibility condition for this derivative. In fact, it is easy to check that $(D_\omega\eta)_\mu^{ab}=0$ for each $\mu, a, b$.

Now, we immediately apply the inverse of a~tetrad to $D_\omega\phi$ and identify it with $\nabla$. In fact, we take a~vector field $X\in \Gamma(TM)$, feed the tetrad $e$ with it, then apply\footnote{Here, we use the so-called \textit{interior product}, i.e., a~map $\iota_\xi:\Omega^{k}(M)\to\Omega^{k-1}(M)$, such that $(\iota_\xi\alpha)(X_1,...,X_{k-1})=\alpha(\xi,X_1,...,X_{k-1})$, for vector fields $\xi, X_1,...X_{k-1}$. Furthermore it respects $\iota_\xi(\alpha\wedge\beta)=(\iota_\xi\alpha)\wedge\beta+(-1)^k\alpha\wedge(\iota_\xi\beta)$, where~$\alpha\in\Omega^k(M)$. Therefore, it forms an antiderivation. The relation with the Lie derivative is given by the formula $\mathcal L_\xi\alpha=d(\iota_\xi\alpha)+\iota_\xi d\alpha$, called the Cartan identity. The interior product of a~commutator satisfies $\iota_{[X,Y]}=[\mathcal L_X, \iota_Y]$, with $X$ and $Y$ vector~fields. } $D_\omega$ to get $D_\omega (\iota_Xe)$, and finally pull it back with the inverse of the tetrad $\bar e$.

In components, this reads as follows:
\begin{equation}
\label{eq:Dontetrads}
\big(D_\omega(\iota_Xe)\big)^a_\mu=D_\mu(e_\nu^aX^\nu)=\partial_\mu (e^a_\nu X^\nu)+\omega^{ab}_\mu\eta_{bc}e^c_\nu X^\nu.
\end{equation}

Pulling back via $\bar e$, we obtain
\begin{equation}
\bar e_a^\sigma\big( D_\mu(e_\nu^aX^\nu)\big)=\bar e_a^\sigma\big(\partial_\mu (e^a_\nu X^\nu)+\omega^{ab}_\mu\eta_{bc}e^c_\nu X^\nu\big).
\end{equation}

We define the Christoffel symbols $\Gamma_{\mu\nu}^\sigma$ as
\begin{equation}
\begin{split}
\Gamma_{\mu\nu}^\sigma&=\bar e^\sigma_a(D_\mu e_\nu^a)\\
&=\bar e^\sigma_a (\partial_\mu e^a_\nu+\omega^{ab}_\mu\eta_{bc} e^c_\nu)
\end{split}
\end{equation}
and, thus, we get
\begin{equation}
\label{eq:nabladef}
\nabla_\mu X^\sigma\coloneqq\bar e_a^\sigma\big( D_\mu(e_\nu^aX^\nu)\big)=\partial_\mu X^\sigma+\Gamma_{\mu\nu}^\sigma X^\nu,
\end{equation}
which is the covariant derivative well known in General Relativity.

We can also see what the curvature form is in terms of the commutator of two of these derivatives, $F_\omega$ and it is given by
\begin{equation}
\big(D_{[\mu}D_{\nu]}\phi\big)^a=\big(\partial_{[\mu}\omega{_{\nu]}}^{ab}+\omega_{[\mu}^{ad}\eta_{de}\omega_{\nu]}^{eb}\big)\eta_{bc}\phi^c=F_{\mu\nu}^{ab}\eta_{bc}\phi^c,
\end{equation}
where $A_{[\mu}B_{\nu]}=A_{\mu}B_{\nu}-A_{\nu}B_{\mu}$ is our convention for the antisymmetrization.
The fact that $F_\omega$ is a~$2$-form shows that $F^{ab}_{\mu\nu}=-F^{ab}_{\nu\mu}$; furthermore, it also holds $F^{ab}_{\mu\nu}=-F^{ba}_{\mu\nu}$.

\subsection{Riemann Curvature Tensor}

\textls[-25]{We can now consider the commutator of two affine covariant derivatives and use Equation \eqref{eq:nabladef}, getting}
\begin{equation}
(\nabla_{[\mu}\nabla_{\nu]}X)^\sigma=\bar e^\sigma_a\big(D_{[\mu} D_{\nu]}( \iota_X e)\big)^a=\bar e^\sigma_a F^{ab}_{\mu\nu}\eta_{bc} e^c_\omega X^\omega.
\end{equation}

We identify the Riemann tensor
\begin{equation}
R_{\mu\nu\omega}{^\sigma}=\bar e^\sigma_a F_{\mu\nu}^{ab}\eta_{bc}e^c_\omega,
\end{equation}
the Ricci curvature tensor
\begin{equation}
R_{\mu\omega}=R_{\mu\sigma\omega}{^\sigma}=\bar e^\sigma_a F^{ab}_{\mu\sigma}\eta_{bc}e^c_\omega
\end{equation}
and thus the Ricci scalar
\begin{equation}
R=g^{\mu\omega}R_{\mu\omega}=\bar e^\mu_d\bar e^\omega_e\eta^{de}\bar e^\sigma_a F^{ab}_{\mu\sigma}\eta_{bc}e^c_\omega=-\bar e^\mu_a\bar e^\omega_b F^{ab}_{\mu\omega}.
\end{equation}

It follows the antisymmetry of the Riemann tensor in the indices $\mu\nu$ and $\omega\sigma$, but it is important to note that we cannot ensure any symmetry in the Ricci curvature instead due to the presence of torsion.

\section{Torsion}\label{sec7}
Here, we start focusing on the importance of torsion, which arises quite naturally as curvature does.

\subsection{Torsion Form}

\begin{Definition}[Solder form/soldering of a~$G$-principal bundle]
Let $\pi: P\to M$ be a~smooth $G$-principal bundle over a~differentiable manifold $M$, $\rho:G\to\operatorname{Aut}(V)$ be a~representation, and $G$ be a~Lie group.

We define the solder form, or soldering, as the vector-valued $1$-form $\theta\in\Omega^1_G(P,V)$ such that $\tilde\theta:TM\to P\times_\rho V$ is a~bundle isomorphism, where $\tilde\theta\in\Omega^1(M,P\times_\rho V)$ is the associated bundle map induced by the isomorphism of $\Omega^1_G(P,V)\cong\Omega^1(M,P\times_\rho V)$.
\end{Definition}

\noindent{\textbf{Observations 10:}}
\begin{itemize}
\item[--] The choice of the solder form is not unique, in general.
\item[--] We can observe that, taking $P=F_O(M)$, $\rho$ as the fundamental representation of O$(3,1)$, and $V$ as the vector space with reference metric $\eta$, $\tilde\theta$ corresponds to our definition of tetrads. The different choices of soldering give rise to different tetrads.
\item[--] In the case that $P=F_O(M)$ and that the associated bundle is simply chosen to be $TM$, the solder form is called \textit{canonical} or \textit{tautological}. Since the associated bundle $TM$ sets the bundle isomorphism $\tilde\theta$ to be the identity map $\text{id}:TM\to TM$. 
\item[--] In {Observations} $4$, we mentioned that the Minkowski bundle cannot be canonically identified with the tangent bundle itself; indeed, we fixed a~reference metric $\eta$, which cannot be pulled back by the identity map to give the metric on $TM$ in general, and thus, the solder form is not canonical.
\end{itemize}

The soldering of the principal frame bundle allows us to define the torsion form\footnote{Torsion can be defined for every principal bundle, but physics arises when considering the frame bundle.}.

\begin{Definition}[Torsion form]
Let $P=F_O(M)$, $\rho:\text{O}(3,1)\to\operatorname{Aut}(V)$ be the fundamental representation, $V$ be a~vector space with reference metric $\eta$, and $\theta\in\Omega^1_G(P,V)$ be a~solder form.\\
We define the torsion form $\Theta\in\Omega^2_G(P,V)$ as follows:
\begin{equation}
\label{eq:torsionform}
\Theta=d_\omega\theta=d\theta+\omega\wedge_f\theta.
\end{equation}
\end{Definition}

\subsection{Torsion in a~Local Basis}
We would like to express the torsion form in terms of tetrads and the gauge field.

In Reference \cite{GS}, a~formula is given and it is obtained by applying the previous definition of the torsion form in a local basis
\begin{equation}
\tilde\Theta^a=(d_\omega e)^a=de^a+\omega^{ab}\eta_{bc}\wedge e^c,
\end{equation}
where $\omega\in \mathcal A_M$ is a local connection.

\subsection{1st Bianchi Identity}
\begin{Proposition}
Following our previous definitions, we have
\begin{equation}
\label{eq:firstbianchi}
d_\omega\Theta=\Omega\wedge_f\theta,
\end{equation}
which is called the first Bianchi identity.
\end{Proposition}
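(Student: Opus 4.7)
The plan is to apply the exterior covariant derivative $d_\omega$ directly to the defining expression $\Theta=d\theta+\omega\wedge_f\theta$ and then reduce the resulting four terms by using $d^2=0$, the graded Leibniz rule for $d$ on $\omega\wedge_f\theta$, and the fact that $f=d\rho$ is a Lie algebra homomorphism. The identity $d_\omega\Theta=\Omega\wedge_f\theta$ is then a special case of the general rule $d_\omega^2\alpha=\Omega\wedge_{d\rho}\alpha$ from Remark~\ref{rem2}, applied to the tensorial form $\theta\in\Omega^1_G(P,V)$.

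Concretely, I would first write
\begin{equation*}
d_\omega\Theta \;=\; d\Theta+\omega\wedge_f\Theta \;=\; d(d\theta)+d(\omega\wedge_f\theta)+\omega\wedge_f d\theta+\omega\wedge_f(\omega\wedge_f\theta).
\end{equation*}
The first summand vanishes by $d^2=0$. The middle two summands combine via the graded Leibniz rule: since $\omega$ is a $1$-form,
\begin{equation*}
d(\omega\wedge_f\theta)=(d\omega)\wedge_f\theta-\omega\wedge_f d\theta,
\end{equation*}
so the $\omega\wedge_f d\theta$ pieces cancel and one is left with
\begin{equation*}
d_\omega\Theta \;=\; (d\omega)\wedge_f\theta+\omega\wedge_f(\omega\wedge_f\theta).
\end{equation*}

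The last step is to identify $\omega\wedge_f(\omega\wedge_f\theta)$ with $\tfrac{1}{2}[\omega\wedge\omega]\wedge_f\theta$. This follows from the Lie algebra homomorphism property $f([\xi,\eta])=f(\xi)f(\eta)-f(\eta)f(\xi)$, together with the antisymmetrisation built into $\wedge_f$: evaluating on vector fields $u,v,w$ and collecting terms, the two compositions $f(\omega(\cdot))\circ f(\omega(\cdot))$ combine into a single bracket $[\omega(\cdot),\omega(\cdot)]$, producing exactly the factor $\tfrac{1}{2}$ appearing in the definition of $\Omega$. Substituting this back,
\begin{equation*}
d_\omega\Theta \;=\; \bigl(d\omega+\tfrac{1}{2}[\omega\wedge\omega]\bigr)\wedge_f\theta \;=\; \Omega\wedge_f\theta,
\end{equation*}
which is the claimed identity.

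The main obstacle I anticipate is purely bookkeeping: keeping the combinatorial factors in $[\omega\wedge\omega]$ and in the iterated $\wedge_f$ consistent with the convention $[\omega\wedge\eta](u,v)=\tfrac{1}{2}\bigl([\omega(u),\eta(v)]-[\omega(v),\eta(u)]\bigr)$ introduced after Equation~\eqref{eq:defcurvform}, so that the $\tfrac{1}{2}$ generated by doubling $\omega(u)f(\omega(v))\theta(w)$-type terms matches the $\tfrac{1}{2}$ in the definition of $\Omega$. Once conventions are aligned, no deeper input is required beyond $d^2=0$, Leibniz, and $f$ being a representation.
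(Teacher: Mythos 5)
Your proposal is correct in substance, but it takes a genuinely different route from the paper. The paper proves the identity via the horizontal-projection derivative $d^h$ of Equation \eqref{eq:hexteriorcovariantderivative}: it evaluates $d\Theta(hu,hv,hw)$, and because $\omega$ annihilates horizontal vectors, the term $\omega\wedge_f d\theta$ drops out immediately and $d\omega$ restricted to horizontal vectors is already $\Omega$ (the $\tfrac{1}{2}[\omega\wedge\omega]$ piece also vanishes there); tensoriality of $\theta$ and $\Omega$ then lets one remove the projections. That argument buys you a complete bypass of the combinatorics, at the price of invoking the equivalence of $d^h$ and $d_\omega$ on tensorial forms. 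Your purely algebraic computation with $d_\omega=d+\omega\wedge_f$ instead hinges on the quadratic identity $\omega\wedge_f(\omega\wedge_f\theta)=\tfrac{1}{2}[\omega\wedge\omega]\wedge_f\theta$, which is the one step you only sketch; it is standard and true, but with the paper's $\tfrac{1}{(1+k)!}$ normalisation of $\wedge_{d\rho}$ and its convention for $[\cdot\wedge\cdot]$ you should actually carry out the evaluation on $u,v,w$ once to confirm the factor of $\tfrac{1}{2}$ survives, since that is where the entire content of your proof sits. One caution on your framing: you open by saying the result ``is a special case of'' Equation \eqref{eq:propd2}, but the paper's own footnote to Remark \ref{rem2} points to the first Bianchi identity as the \emph{proof} of \eqref{eq:propd2}, so citing it as an ingredient would be circular — your argument is saved only because the subsequent computation is self-contained and does not actually use \eqref{eq:propd2}. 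I would drop that sentence or rephrase it as ``this establishes \eqref{eq:propd2} for $\alpha=\theta$.''
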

\begin{proof}
For this proof, we prefer using Equation \eqref{eq:hexteriorcovariantderivative}.

We consider three vector fields $u,v,w\in\Gamma(TP)$. By definition, it follows

\begin{equation}
\begin{split}
d^h\Theta(u,v,w)&=d\Theta(hu,hv,hw)\\[3pt]
&=(d\omega\wedge_f\theta-\omega\wedge_fd\theta)(hu,hv,hw)\quad\text{(because of Equation \eqref{eq:torsionform})}\\[3pt]
&=d\omega\wedge_f\theta(hu,hv,hw)\quad\text{(because of Equation \eqref{eq:defconnform})}\\[3pt]
&=\Omega\wedge_f\theta(u,v,w)\text{.}
\end{split}
\end{equation}

The last equality holds because of tensoriality of $\theta$ and the second remark in {Remark} \ref{rem2}.
\end{proof}
This proposition is a~natural consequence of the property of the covariant differential expressed in Equation \eqref{eq:propd2}.

\subsection{Torsion Tensor}
\begin{Definition}[Torsion tensor]
Given two vector fields $X,Y\in \Gamma(TM)$ and a~$1$-form $\tau\in\Omega^1(M)$, we define the torsion tensor field $Q$ as the tensor field of type-$\binom{1}{2}$ such that
\begin{equation}
Q(X,Y;\tau)\coloneqq \tau(Q(X,Y))=\tau\big(\bar e(d_\omega e(X, Y))\big).
\end{equation} 

It is evidently antisymmetric in $X, Y$, by definition. 
\end{Definition}

\begin{Proposition}
We have the following formula:
\begin{equation}
Q(X,Y)=\nabla_XY-\nabla_YX-[X,Y]
\end{equation}
and, in components, it reads
\begin{equation}
Q_{\mu\nu}{^\sigma}=\Gamma^\sigma_{\mu\nu}-\Gamma^\sigma_{\nu\mu}-C^\sigma_{\mu\nu},
\end{equation}
where $C^\sigma_{\mu\nu}=0$ in a~holonomic basis for $X$ and $Y$ and $\nabla$ is the covariant derivative\footnote{See Reference \cite{strau} for references about this.}.
\end{Proposition}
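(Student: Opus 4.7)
The plan is to unpack the definition $Q(X,Y)=\bar e\bigl(d_\omega e(X,Y)\bigr)$ and recognize that, after applying $\bar e$, the algebraic pieces reassemble into the covariant derivatives $\nabla_X Y$ and $\nabla_Y X$ established in Section~\ref{sec6dot1}, plus a leftover Lie-bracket term. Concretely, I would start by expanding
\begin{equation*}
d_\omega e(X,Y)=de(X,Y)+(\omega\wedge_f e)(X,Y),
\end{equation*}
use the standard identity $de(X,Y)=X\bigl(e(Y)\bigr)-Y\bigl(e(X)\bigr)-e([X,Y])$ for a $V$-valued $1$-form, and write the $\omega\wedge_f e$ piece componentwise via the fundamental representation so that it reads $\omega^{ac}(X)\eta_{cb}e^b(Y)-\omega^{ac}(Y)\eta_{cb}e^b(X)$.

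Next, I would group terms by the vector field being differentiated: the pair $X\bigl(e^a(Y)\bigr)+\omega^{ac}(X)\eta_{cb}e^b(Y)$ is exactly $\bigl(D_X e(Y)\bigr)^a$ by the definition of $D_\omega$ (Equation~\eqref{eq:Domegaforgeneralform} contracted with $X^\mu$), and similarly for the $Y$-derivative. Thus
\begin{equation*}
(d_\omega e)^a(X,Y)=(D_X e(Y))^a-(D_Y e(X))^a-e^a([X,Y]).
\end{equation*}
Applying $\bar e^\sigma_a$ and invoking Equation~\eqref{eq:nabladef} to identify $\bar e^\sigma_a(D_X e(Y))^a=(\nabla_X Y)^\sigma$, together with the trivial identity $\bar e^\sigma_a e^a([X,Y])=[X,Y]^\sigma$ coming from $\bar e\circ e=\mathrm{id}$, yields the intrinsic formula $Q(X,Y)=\nabla_X Y-\nabla_Y X-[X,Y]$.

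For the component formula, I would specialize to a coordinate basis where $[\partial_\mu,\partial_\nu]=0$ so that $\nabla_\mu\partial_\nu=\Gamma^\sigma_{\mu\nu}\partial_\sigma$ directly from Equation~\eqref{eq:nabladef}; antisymmetrizing in $\mu,\nu$ produces $Q_{\mu\nu}{}^\sigma=\Gamma^\sigma_{\mu\nu}-\Gamma^\sigma_{\nu\mu}$. In a general, non-holonomic frame the bracket $[X,Y]$ no longer vanishes and contributes the structure functions $C^\sigma_{\mu\nu}$, giving the stated form.

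The only real obstacle is bookkeeping: keeping careful track of the action of $\omega$ through the representation $d\rho$ (using the isomorphism $\Lambda^2 V\cong\mathfrak{so}(3,1)$ of Equation~\eqref{eq:isoalgebrawedge2}) so that $(\omega\wedge_f e)$ matches the connection piece inside $D_\omega$, and correctly distinguishing the $\mathcal V$-valued objects from their $TM$-counterparts under $e$ and $\bar e$. Once this identification is made cleanly, the proof reduces to one line of rearrangement.
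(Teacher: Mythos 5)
Your argument is correct, but it takes a genuinely different route from the paper. The paper's proof never leaves a holonomic local basis: it writes $Q=\bar e_a(d_\omega e)^a$, expands directly into components as $\bar e^\sigma_a\big(\partial_{[\mu}e^a{}_{\nu]}+\omega^a_{[\mu b}e^b_{\nu]}\big)dx^\mu\wedge dx^\nu\otimes\partial_\sigma$, and recognizes $\Gamma^\sigma_{\mu\nu}=\bar e^\sigma_a(D_\mu e^a_\nu)$ to land on $Q_{\mu\nu}{}^\sigma=\Gamma^\sigma_{\mu\nu}-\Gamma^\sigma_{\nu\mu}$; the bracket term never appears because $[\partial_\mu,\partial_\nu]=0$ is built in from the start. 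You instead use the coordinate-free Cartan formula $de(X,Y)=X(e(Y))-Y(e(X))-e([X,Y])$ (the same identity the paper uses for $d\omega$ in the curvature section), regroup into $D_Xe(Y)-D_Ye(X)-e([X,Y])$, and push through $\bar e$ to get the intrinsic identity $Q(X,Y)=\nabla_XY-\nabla_YX-[X,Y]$ before specializing. This buys you strictly more: you actually prove the first displayed formula of the Proposition, including the origin of the $C^\sigma_{\mu\nu}$ term in an anholonomic frame, whereas the paper's computation only establishes the holonomic component version and leaves the intrinsic statement implicit. The one bookkeeping point to watch is the normalization of $\omega\wedge_f e$: the paper's footnoted definition of $\wedge_{d\rho}$ carries a $\tfrac{1}{(1+k)!}$ prefactor while its local-basis torsion formula uses the convention $A_{[\mu}B_{\nu]}=A_\mu B_\nu-A_\nu B_\mu$ with no $\tfrac12$; your expansion matches the latter, which is the one consistent with $\Gamma^\sigma_{\mu\nu}=\bar e^\sigma_a(\partial_\mu e^a_\nu+\omega^{ab}_\mu\eta_{bc}e^c_\nu)$, so your identification of $\nabla_XY$ goes through.
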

\begin{proof}
Working in a local basis and recalling the definition of torsion
\begin{equation}
Q=\bar e\cdot(d_\omega e)=\bar e_a(d_\omega e)^a,
\end{equation}
it follows
\begin{equation}
\begin{array}{ll}
\bar e_a(d_\omega e)^a&=\bar e^\sigma_a\big(\partial_{[\mu} e^a{_{\nu]}}+\omega^a_{[\mu b}e^b_{\nu]}\big)dx^\mu\wedge dx^\nu\otimes\partial_\sigma\\
&=(\Gamma^\sigma_{\mu\nu}-\Gamma^\sigma_{\nu\mu})dx^\mu\wedge dx^\nu\otimes\partial_\sigma \quad \text{($\Gamma^\sigma_{\mu\nu}=\bar e^\sigma_a(D_\mu e^a_\nu)$.)}\\
&=Q_{\mu\nu}{^\sigma}dx^\mu\wedge dx^\nu\otimes\partial_\sigma\text{,}
\end{array}
\end{equation}
then $Q_{\mu\nu}{^\sigma}=\Gamma^\sigma_{\mu\nu}-\Gamma^\sigma_{\nu\mu}$.
\end{proof}

We have now set up all the background for building our theory and for discussing field equations of ECSK theory.

\section{Field Equations and Conservation Laws}
We present here field equations for ECSK theory\footnote{Some classical works about ECSK theory and General Relativity with torsion, like References \cite{Hehl:2007bn,RevModPhys.48.393,Chakrabarty:2018ybk}.}.  Thus, we will neither assume the possibility of a~propagating torsion (and we will always keep non-identically vanishing Riemann curvature \cite{Nester:1998mp}) nor display a~lagrangian for a~totally independent torsion field; rather, we will only set the Palatini--Cartan lagrangian for gravity, as done in Reference \cite{Cattaneo:2017hus}, and a~matter lagrangian as the source. This theory is known as Einstein--Cartan--Sciama--Kibble gravity (ECSK).

\textls[-20]{In the present case, torsion reduces to an algebraic constraint. This is a~consequence of making torsion join the action of the theory as only contained in the Ricci scalar because of a~non-torsion-free connection and not with an independent coupling coefficient. In works like References \cite{Fabbri:2017rjf,Fabbri:2018hhv,Fabbri:2017xlx,Fabbri:2014dxa}, torsion is present as an~independent part (independent coupling coefficient) of the action and it does propagate.}

This is why the ECSK is considered as the most immediate generalization of General Relativity with the presence of torsion.

Therefore, we wish to eventually obtain an action of two independent objects, tetrads and connection, where this latter action should give rise to equations for curvature when varying tetrads and for torsion when varying the connection.

We will focus more on the geometrical side of these equations and we will not dwell on deepening matter interaction (couplings, symmetry breaking, etc.), as done for instance in References \cite{FQ,Gies:2013noa,Diether:2019nxe,Cabral:2019gzh,Inglis:2017tmu}.

\subsection{ECSK Equations}
ECSK theory with cosmological constant belongs to the Lovelock--Cartan family, which describes the most general action in four dimensions such that this action is a~polynomial on the tetrads and the spin connection (including derivatives), is invariant under diffeomorphisms and local Lorentz transformations, and is constructed without
the Hodge dual\footnote{See Reference \cite{mardones} for details.}.

The variational problem is given by an action of the kind
\begin{equation}
\label{eq:action}
S=S_{PC}+S_\textit{matter},
\end{equation}
where the Palatini--Cartan action is
\begin{equation}
\label{eq:geoaction}
S_{PC}[e,\omega]=\int_M\operatorname{Tr}\big[\frac{1}{2}e\wedge e \wedge F_\omega+\frac{\Lambda}{4!}e^4\big].
\end{equation}

We work in a system of local connections in $\mathcal A_M$. The wedge product is defined over both space--time and internal indices as a~map\footnote{Such that, for $\alpha\in\Omega^k(M,\Lambda^p\mathcal V)$ and $\beta\in\Omega^l(M,\Lambda^q\mathcal V)$, we have $\alpha\wedge\beta=(-1)^{(k+p)(l+q)}\beta\wedge\alpha$.}$\wedge:\Omega^k(M,\Lambda^p\mathcal V)\times\Omega^l(M,\Lambda^q\mathcal V)\to\Omega^{k+l}(M,\Lambda^{p+q}\mathcal V)$ and the trace is a~map $\text{Tr}:\Lambda^4 V\to \mathbb R$, normalized such that (for $v_i$ elements of a~basis in $V$) $\text{Tr}[v_i\wedge v_j\wedge v_k\wedge v_l]=\varepsilon_{ijkl}$. The choice of the normalization of the trace works as a~choice of orientation for $M$ (since the determinant of a~matrix in $O(3,1)$ may be $\pm1$). Therefore, we reduce the total improper Lorentz group O$(3,1)$ to the only orientation preserving part, which is still not connected, SO$(3,1)$. This gives an invariant volume form on $M$. In this way, we consider sections of $\Lambda^kT^*M\otimes\Lambda^p\mathcal V$.

Later on, we will make explicit some indices and keep implicit some others; for this purpose, we will specify what kind of wedge product we are dealing with, even though it will be evident because it will be among the implicit indices.

We recall the definition of $F_\omega$ and deduce the identity for its variation
\begin{equation}
\label{eq:omegavariation}
\delta_\omega F_\omega=d_\omega\delta\omega,
\end{equation}
where we stress that, despite $\omega$ being non-tensorial, $\delta\omega$ is instead, and thus $d_\omega\delta\omega$ is well defined.

The action for the matter is of the kind
\begin{equation}
\label{eq:mataction}
S_{matter}{[e,\omega,\varphi]}=\kappa\int_M\text{Tr}[L(e,\omega, \varphi)],
\end{equation}
where $L$ is an invariant lagrangian density form with the proper derivative order in our variables, $\varphi$ is a~matter field, and $\kappa$ is a~constant.

Such matter lagrangian is supposed to be source for both curvature and torsion equations, namely~it will be set for fulfilling some conditions fitting the theory.

Therefore, varying the actions in Equations \eqref{eq:geoaction} and \eqref{eq:mataction} and considering Equation \eqref{eq:omegavariation}, we have\footnote{Omitting equations of motion $\frac{\delta L}{\delta\varphi}=0$ for the matter field, which have to be satisfied for conservation laws anyway.} 

\begin{equation}
\label{eq:intFE}
\begin{array}{ll}
\int_M\text{Tr}[\delta e\wedge( e\wedge F_\omega+\frac{\Lambda}{3!} e^3)] &=\int_M \text{Tr}[\kappa\frac{\delta L}{\delta e}\wedge\delta e]\\[3.5pt]
\int_M\text{Tr}[\frac{1}{2}d_\omega(e\wedge e)\wedge\delta\omega ]&=\int_M\text{Tr}[\kappa \frac{\delta L}{\delta \omega}\wedge\delta\omega]\text{,}\\
\end{array}
\end{equation}
which is equivalent to
\begin{equation}
\label{eq:FE}
\begin{array}{ll}
\varepsilon_{abcd}e^b\wedge F^{cd}_\omega+\frac{\Lambda}{3!} \varepsilon_{abcd} e^b\wedge e^c\wedge e^d &= \kappa\frac{\delta\text{Tr}[L]}{\delta e^a}\coloneqq \kappa T_a\\[3pt]
\frac{1}{2}\varepsilon_{abcd}d_\omega(e^c\wedge e^d) &=\kappa \frac{\delta\text{Tr}[L]}{\delta \omega^{ab}}\coloneqq\kappa\Sigma_{ab}\\
\end{array}
\end{equation}
where the wedge product here is only between differential forms.

Setting $\Lambda=0$ and in performing the derivative, Equation \eqref{eq:FE} can be rewritten as
\begin{equation}
\label{eq:NFE}
\begin{array} {ll}
\varepsilon_{abcd}e^b\wedge F^{cd}_\omega &= \kappa T_a\\
\varepsilon_{abcd}\,\tilde{Q}^c\wedge e^d &=\kappa\Sigma_{ab}\text{,}\\
\end{array}
\end{equation}
where we have set $\tilde{Q}=d_\omega e$.

These are equations for the ECSK theory in their implicit form\footnote{Without making space--time indices explicit.}, where $T$ and $\Sigma$ are related to, respectively, the \textit{energy momentum} and the \textit{spin} tensor, once pulled back.

By making all the indices explicit, as given in Reference \cite{FQ}, and properly setting $\kappa$ according to natural units\footnote{All fundamental constants $=1$.}, Equation \eqref{eq:NFE} takes the following form
\begin{equation}
\begin{array}{ll}
G_{\mu\nu}&=8\pi T_{\mu\nu}\\[3pt]
Q_{\mu\nu}{^\sigma}&=-16\pi\Sigma_{\mu\nu}{^\sigma}\text{.}\\
\end{array}
\end{equation}

\vspace{12pt}
\noindent{\textbf{Observations 11:}}
\begin{enumerate}
\item $T_{\mu\nu}$ is not symmetric, as expected from the non-symmetry of the Ricci curvature $R_{\mu\nu}$.
\item We stress that, even though $e$ is an isomorphism, the map $e\wedge\boldsymbol{\cdot}:\Omega^k(M,\Lambda^p\mathcal V)\to\Omega^{k+1}(M,\Lambda^{p+1}\mathcal V)$ is not an isomorphism, in general. In fact, taking $\frac{\delta L}{\delta e}=0$ (with $\Lambda=0$) in Equation \eqref{eq:intFE} does not imply $F_\omega=0$, which would imply a~flat connection.
\item Setting $\frac{\delta L}{\delta \omega}=0$ in Equation \eqref{eq:intFE}, one recovers the condition of vanishing torsion (hence, a Levi--Civita connection) and, therefore, the Einstein equations.
\item It is interesting to note that, requiring a~totally antisymmetric spin tensor, sets the total antisymmetry of the torsion tensor. Namely, in the case of a~totally antisymmetric $\Sigma$, we need to couple the only totally antisymmetric part of torsion into the geometrical lagrangian. This is further discussed in Reference \cite{FQ}.
\end{enumerate}

\subsection{Conservation Laws}

We have two symmetries, i.e., local Lorentz transformations and diffeomorphisms. They are continuous symmetries, and as such, we expect two conservation laws. Since we are dealing with local symmetries, we shall not find two conserved currents but rather two relations for the variations of the matter lagrangian w.r.t. $e$ and $\omega$. 

These relations directly imply the Bianchi identities of Equations \eqref{eq:secondbianchi} and \eqref{eq:firstbianchi}, but we could also do the converse, namely assuming Equations \eqref{eq:secondbianchi} and \eqref{eq:firstbianchi} and then deriving such conservation laws. This means that conservation laws are a~consequence of symmetry on the one hand, implemented via the following symmetries (respectively diffeomorphisms and local SO$(3,1)$)

\begin{equation}
\label{eq:diffeotransf}
\begin{array} {ll}
\delta_\xi e^a&=\mathcal L_\xi e^a=\iota_\xi de^a+d\iota_\xi e^a\\[2pt]
\delta_\xi \omega^{ab}&=\mathcal L_\xi \omega^{ab}=\iota_\xi d\omega^{ab}+d\iota_\xi \omega^{ab},\\
\end{array}
\end{equation}
where $\xi$ is the generator vector field,
\begin{equation}
\label{eq:lorentztransf}
\begin{array}{ll}
\delta_\Lambda e^a&=\Lambda^{ab}\eta_{bc}e^c\\[2pt]
\delta_\Lambda \omega^{ab}&=-d_\omega\Lambda^{ab}\quad\Lambda\in\mathfrak{so}\text{(3,1)},
\end{array}
\end{equation}
or a~direct consequence if we impose field equations and, thus, gravitational dynamics and Bianchi identities on the other hand.

We will follow the shortest derivation, namely to implement the Bianchi identities of Equations \eqref{eq:secondbianchi} and \eqref{eq:firstbianchi} on field Equation \eqref{eq:NFE}.

Thanks to Bianchi identities, left hand side of field Equation \eqref{eq:NFE} can be rewritten in the following way: 
\begin{equation}
\label{eq:clbianchi}
\begin{array}{ll}
d_\omega (\varepsilon_{abcd}e^b\wedge F^{cd}_\omega)&=\iota_a\tilde Q^b\wedge(\varepsilon_{bcde}e^c\wedge F^{de}_\omega) +\iota_aF_\omega^{bc}\wedge(\varepsilon_{bcde}\tilde Q^d\wedge e^e)\\[3pt]
d_\omega(\varepsilon_{abcd}\tilde Q^c\wedge e^d)&=-\frac{1}{2}(\varepsilon_{acde}e^c\wedge F^{de}_\omega\wedge e_b-\varepsilon_{bcde}e^c\wedge F^{de}_\omega\wedge e_a),
\end{array}
\end{equation}
where $\iota_a=\iota_{\bar e_a}$ and $e_b=\eta_{bc}e^c$.

However, because of the same field in Equation \eqref{eq:NFE}, they reduce to

\begin{equation}
\label{eq:conslaws}
\begin{array}{ll}
d_\omega T_a&=\iota_a\tilde Q^b\wedge T_b+\iota_aF_\omega^{bc}\wedge\Sigma_{bc}\\[2pt]
d_\omega\Sigma_{ab}&=\frac{1}{2}T_{[a}\wedge e{_{b]}},
\end{array}
\end{equation}

\textls[-25]{In References \cite{Bonder:2018mfz} and \cite{Jiang:2000xp}, a~more detailed discussion can be found.
These are conservation laws for ECSK theory.}

In components, as given in Reference \cite{FQ}, they read
\begin{equation}
\begin{array}{ll} 
\nabla_\mu T^{\mu\nu}+T_{\sigma\rho}Q^{\sigma\rho\nu}-\Sigma_{\mu\sigma\rho}R^{\mu\sigma\rho\nu}&=0\\[3pt]
\nabla_\mu\Sigma_{\sigma\omega}{^\mu}+\frac{1}{2}T_{[\sigma\omega]}&=0.
\end{array}
\end{equation}

\section{Conclusions}
We have set up all the mathematical background for building ECSK theory, eventually achieving field equations and conservation laws. 

In ECSK theory, torsion is only an algebraic constraint and it does not propagate. This is a~natural consequence of inserting torsion into the theory without an independent coupling coefficient but simply generalizing the Einstein--Hilbert action (or Palatini action in our formulation) $\int R\sqrt{-g}d^4x$ to a~non-torsion-free connection $\nabla$ (or spin connection in our case). In this case, the Ricci scalar contains both curvature and torsion.

It is possible to immediately recover General Relativity by imposing the zero torsion condition, which, in the considered theory, translates to letting the matter field $\varphi$ generate a~null contribution to the spin tensor $\Sigma_{\mu\nu}{^\sigma}$. The most natural matter fields which would fit with the theory are spinors; indeed, spinors are the way in which we can have a~non-vanishing spin tensor which is also dynamical because of equations of motion for the spinor field.

This review does not want to substitute the well-known literature but to just give a~self-contained and mathematically rigorous introduction to ECSK theory, providing also some references for deepening knowledge in the present subjects. Also, we intentionally did not dive too deeply into physical applications to cosmology (like done in References \cite{Poplawski1,Poplawski2,BravoMedina,Mehdizadeh,Pesmatsiou,Kranas}), that might be a~valid argument for writing another review article.

\end{document}